\theoremstyle{plain}
\newtheorem{assumption}{Assumption}
\newtheorem{remark}{Remark}
\newtheorem{lemma}{Lemma}
\newtheorem{corollary}{Corollary}
\theoremstyle{plain}
\newtheorem{theorem}{Theorem}
\theoremstyle{definition}
\newtheorem{definition}{Definition}
\theoremstyle{definition}
\theoremstyle{definition}
\newtheorem{problem}{Problem}
\title{\LARGE \bf
Socioeconomic Impact of Emerging Mobility Markets and Implementation Strategies
}
\author{Ioannis Vasileios Chremos, \textit{Student Member, IEEE}, and Andreas A. Malikopoulos, \textit{Senior Member, IEEE}%
\thanks{This research was supported by the Sociotechnical Systems Center (SSC) at the University of Delaware.}%
\thanks{The authors are with the Department of Mechanical Engineering, University of Delaware, Newark, DE 19716 USA (emails: \tt\small{ichremos@udel.edu}; \tt\small{andreas@udel.edu}.)}%
}
\begin{document}

\maketitle

\begin{abstract}

Emerging mobility systems such as connected and automated vehicles (CAVs) provide the most intriguing opportunity for more accessible, safe, and efficient transportation. CAVs are expected to significantly improve safety by eliminating the human factor and ensure transportation efficiency by allowing users to monitor transportation network conditions and make better operating decisions. However, CAVs could alter the users' tendency-to-travel, leading to a higher traffic demand than expected, thus causing rebound effects (e.g., increased vehicle-miles-traveled). In this chapter, we focus on tackling social factors that could drive an emerging mobility system to unsustainable congestion levels. We propose a mobility market that models the economic in-nature interactions of the travelers in a smart city network with roads and public transit infrastructure. Using techniques from mechanism design, we introduce appropriate monetary incentives (e.g., tolls, fares, fees) and show how a mobility system consisting of selfish travelers that seek to travel either with a CAV or use public transit can be socially efficient. Furthermore, the proposed mobility market ensures that travelers always report their true travel preferences and always benefit from participating in the market; lastly, we also show that the market generates enough revenue to potentially cover its operating costs.

\end{abstract}

\section{Introduction}
\label{sec:introduction}

Nowadays, it is nearly impossible to commute in a major urban area without the frustration of congestion and traffic jams. Moreover, congestion is one of the leading factors behind road accidents and altercations, negatively impacting the economic success of cities and the quality of life of their citizens. For these reasons, congestion has been broadly recognized as one of the major challenges to address for next-generation cities. One incoming highly transformative innovation that promises to address congestion though is autonomous driving, and in particular, connected and automated vehicles (CAVs). Recent advancements in emerging mobility systems with CAVs are highly expected to eliminate congestion and increase mobility efficiency in terms of energy consumption and travel time \cite{zhao2019_ITSMagazine}. In addition, CAVs are expected to have vast technological, commercial, and regulatory dimensions \cite{marletto2019}.

There has been a significant amount of work on the technological impact of CAVs, mostly focusing on congestion, emissions, energy consumption, and safety \cite{taiebat2018,zmud2017}. It is apparent that CAVs will transform today's urban transportation system and revolutionize mobility \cite{barnes2017}. However, one of the most novel and defining characteristics of an emerging mobility system is its socioeconomic complexity. Mobility is an indispensable prerequisite for social, cultural, and economic development as well as social participation. Thanks to the unprecedented improvements in mobility, we expect a significant alteration in human behavior and, most importantly, on tendency-to-travel. This may lead to unintended consequences, i.e., rebound effects, in the sense of additional energy use and greenhouse gas emissions, as well as leading to decreases in the density of urban areas and negatively impacting congestion. In addition, future mobility systems will enable human-vehicle interactions between people of any age and abilities, thus allowing enhanced and universal accessibility. One key reason why connectivity (e.g., Internet of Things) and automation in mobility may lead to rebound effects is because of the high levels of comfort and convenience - factors that urge drivers, passengers, and travelers to change their commute and travel tendencies, and thus use their vehicles quite more frequently and more unexpectedly. As urban social life has been greatly associated with the technological impact of the car, this compels us to reassess the relationship between automobility and social life \cite{sheller2000,bissell2020}. To add to our argument, evident from similar technological revolutions, for example, the impact of elevators on building design and social class hierarchies \cite{bernard2014}, human social perspective and view can have a tremendous effect on how technological innovations are utilized and implemented. For all these reasons, it is vital to study the impact of CAVs in a sociotechnical context focusing on the social implications and attempt to provide optimal solutions for the efficient CAV-utilization in society.


There is a solid body of research now available for optimizing the efficiency of emerging mobility systems with CAVs. Over the last decade, several research efforts reported in the literature \cite{zardini2020,malikopoulos2018_Automatica,malikopoulos2021_Automatica,zhang2019,zhang2018} have aimed at addressing questions regarding the CAVs' impact on transportation efficiency. For example, can we consider the problem of optimizing fuel economy and emissions by coordinating a mobility system consisting of CAVs? What would be the appropriate conceptual approaches for modeling and optimizing emerging mobility systems? Recent technological developments can answer the above questions, indicating that CAVs will most likely help us eliminate congestion, significantly decrease fuel consumption, and minimize road accidents. Analytical frameworks have been proposed to quantify and evaluate the impacts of CAVs from the technological perspective \cite{jackeline2016_ITSC,jackeline2016_IWCTS}. Furthermore, coordination of CAVs at different traffic scenarios (e.g., intersections, vehicle-following) have been extensively evaluated in the literature \cite{jackeline2017_ITS,xiao2021,beaver2020_VSD,mahbub2020_Automatica,jang2019_ICCPS}. Moreover, the impact of CAVs has been identified as one that will enable traffic administrators to monitor transportation network conditions efficiently and effectively, thus improving the operating decisions that are required daily \cite{sarkar2016,zhao2019_ITSMagazine}. However, they are challenges. The cyber-physical nature of emerging mobility systems is associated with significant control challenges and gives rise to a new level of complexity in modeling and control \cite{ferrara2018}. These challenges tend to focus on the technological dimension, and what is mostly missing is a complementary study to the broader social implications of CAVs. For example, the impact of selfish social behavior in routing networks of regular and autonomous vehicles has been studied \cite{mehr2019_TCNS,lazar2020,biyik2021}, as well as ``how people learn'' in mobility systems with behavioral dynamics \cite{krichene2015,krichene2016,cabannes2018_ITSC,lam2016}. However, it seems that the problem of how CAVs will affect human tendency-to-travel and decision-making has not been adequately approached yet. Understanding this ``social'' aspect of CAVs is critical in our effort to design efficient mobility systems.

One of the standard approaches to alleviate congestion in a transportation system has been the management of demand size due to the shortage of space availability and scarce economic resources in the form of congestion pricing (alternatively called ``tolling mechanisms'' \cite{pigou2013,vickrey1969}). Such an approach focuses primarily on intelligent and scalable traffic routing, in which the objective is to guide and coordinate users in path-choice decision-making. For example, one computes the shortest path from a source to a destination regardless of the changing traffic conditions \cite{schmitt2006}. Interestingly, by adopting a game-theoretic approach, advanced systems have been proposed to assign users concrete routes or minimize travel time and studying the Nash equilibria under different tolling mechanisms \cite{chen1998,joksimovic2004,wada2011,salazar2019,chremos2020_CDC,chremos2021_ECC}. This motivates us to ask: ``How can we design an emerging mobility system that ensures that all travelers reach their destination safely, efficiently, and in a timely manner?'' This question is quite important as it is widely accepted that CAVs will revolutionize the way people travel. We aim to provide a first-attempt answer to this question in this chapter and argue that a sociotechnical approach focusing on the social dimension of a mobility problem can help us design the next-generation mobility systems. To achieve this, we consider a mobility system with decentralized information (alternatively called ``asymmetric information'') and multiple selfish and intelligent decision-makers (e.g., travelers), who, in turn, may misreport their true travel preferences for better individual benefits. Hence, based on their background and unique behavioral tendencies, travelers make decisions that generally do not lead to system-wide optimal performance. We tackle this \emph{discrepancy between individual and collective interests} \cite{chremos2020_ITSC} by reverse-engineering the mobility system from its optimal solution (e.g., efficiency, congestion-free) to what should each traveler do via the implementation of monetary incentives. This method in economics is known as ``mechanism design,'' in which by treating systems as economic institutions, we can control and coordinate the selfish agents' ``economic activity'' (e.g., which mode of transportation to use).

The theory of mechanism design was developed as an objective-first approach to efficiently align the individuals' and system's interests in problems of asymmetric information, where the individual agents have private preferences \cite{mas_colell1995,nisan2007}. It can be viewed as the art of designing the rules of a game to achieve a specific desired outcome. A well-established and broadly-used mechanism that has been successful in widely different applications (e.g., auctions, public projects, and cost-minimization problems) is the Vickrey-Clarke-Groves (VCG) mechanism \cite{vickrey1961,clarke1971,groves1973}. The VCG mechanism ensures the existence and implementation of a dominant strategy equilibrium, which is an efficient solution and allows selfish agents to make a decision (alternatively choose a strategy) that is best no matter what other agents may decide. Agents are also incentivized to truthfully report their private preferences and have no reason (e.g., chance of receiving negative utility) not to participate in the mechanism. However, the VCG mechanism is known to be an extravagant mechanism, i.e., it can generate big surpluses. Overall, mechanism design has broad applications spanning surprisingly many different fields, including microeconomics, social choice theory, and control engineering. Applications in engineering include communication networks \cite{renou2012}, social networks \cite{dave2020_TAC}, transportation routing \cite{bian2020}, online advertising \cite{kakade2013}, smart grid \cite{samadi2012}, multi-agent systems \cite{shoham2008}, and in general resource allocation problems \cite{hurwicz2006}. We provide a formal overview of mechanism design in Section \ref{sec:theoretical-preliminaries}.

The application of mechanism design is not new in transportation and mobility problems \cite{iwanowski2003,teodorovic2008,vasirani2011,raphael2015,olarte2018}. For example, it has been used to provide solutions to individual route selection under different congestion traffic scenarios (e.g., first-mile ridesharing, selfish routing, tradable driving permits). In particular, auction-based mechanisms treat traffic congestion as an economic problem of supply and demand, focusing on travel time allocation or routing. So, on the one hand, auctions have been proposed to design pricing schemes with tolls in a network of roads leading to a spark of studies in auctioning techniques. On the other hand, this approach has important limitations: (i) the implementability of auction-based tolling on highways is not straightforward due to the dynamic and fast-changing nature of transportation systems; (ii) it is also uncertain how the public (e.g., drivers, passengers, travelers) will respond concerning toll roads in an auction setting. Therefore, understanding the travelers' interests (willingness-to-pay, value of time) and the impacts on different sociodemographic groups become imperative for a socially-efficient design of an emerging mobility system. For these reasons, it is essential to design an emerging mobility system whose focal point is the social aspect and societal impact of CAVs. In conjunction, it is the authors' belief that the emerging mobility systems - CAVs, shared mobility, electric vehicles - will be characterized by their socioeconomic complexity: (1) improved productivity and energy efficiency, (2) widespread accessibility, and (3) drastic urban redesign and evolved urban culture. This characteristic can naturally be modeled and analyzed using game theory/mechanism design and behavioral economics alongside control and optimization techniques. One of the main arguments in this chapter is that the social interactions of human travelers with CAVs, and other modes of transportation can be modeled as an economically-inspired mobility market, where monetary incentives (tolls) are used to induce the desired socially-efficient outcome.

Our aim is to develop a holistic and rigorous framework to capture the societal impact of connectivity and automation in emerging mobility systems and provide solutions that prevent any potential rebound effects (e.g., increased vehicle-miles-traveled, increased travel demand, empty trips). To achieve this aim, as a first attempt, we study an emerging mobility system consisting of a finite group of travelers who seek to travel in a ``smart city,'' where a central authority (alternatively called social planner) seeks to ensure the efficient distribution and operation of the different modes of transportation offered by the city. We call these different modes of transportation ``mobility services.'' A few examples of mobility services are CAVs, shared vehicles, and public transit (e.g., train, bus, light rail, subway). The travelers request to use at most one service to satisfy their mobility needs, i.e., to reach their destination, via a smartphone app easily accessible to all travelers. The social planner (e.g., a central computer) compiles all travelers' origin-destination requests and other information (e.g., preferred travel time, value of time, and maximum willingness-to-pay) in order to provide a travel recommendation to each traveler. The social planner's goal is to ensure that the aggregate travel recommendations are \emph{socially-efficient}. Informally, by socially-efficient, we mean that the endmost collective travel recommendation must achieve two objectives: (i) respect and satisfy the travelers' preferences regarding mobility, and (ii) ensure the alleviation of congestion in the system. Since our focus is to provide socially-efficient solutions, we consider a city that supports connected and automated mobility technologies on its roads and public transit infrastructure. Subsequently, the social planner is fully aware of the system's capabilities and network's capacity. In other words, the social planner is fully capable of computing the maximum capacity of each mobility service and the associated costs aimed at providing travel recommendations to all travelers.

Our objective in this chapter is to design a mobility market of an emerging mobility system and provide a socially-efficient solution consisting of well-designed and appropriate monetary incentives (e.g., tolls, fares, fees) for a social planner to guarantee the realization of the desired outcome, i.e., maximize the social welfare of all travelers. At the same time, our solution will ensure to provide such incentives to travelers so that the usage of any mobility service will not lead to congestion in the mobility system. In other words, we design a mobility market that efficiently assigns each traveler to the ``right'' mode of transportation.


Our contributions are the following: we design a socially-efficient mobility market that assigns mobility services to a finite group of travelers by taking into consideration their travel preferences. We achieve that by implementing a special case of the VCG mechanism after modifying it accordingly for a mobility problem. We show that the proposed mobility market is incentive compatible and individually rational, two properties that ensure all selfish travelers are truthful in their communication with the social planner and voluntarily participate in the mobility market. We also show that the proposed market is economically sustainable, i.e., it generates revenue from each traveler and ensures that the operating costs of each mobility service are covered. It is through the appropriate design of monetary incentives that we successfully incentivize all travelers to truthfully report their travel preferences and voluntarily participate in the market. Thus, we are guaranteed a socially-efficient mobility solution. The proposed mobility market also provides an incentive to central authorities to implement it, since as we show, the market ensures that there are minimum acceptable payments to cover the operating costs of the mobility services.


The chapter is structured as follows. In Section \ref{sec:theoretical-preliminaries}, we review the main concepts of mechanism design and briefly discuss the VCG mechanism. In Section \ref{sec:mobility-market}, we present the mathematical formulation of the emerging mobility market, which forms the basis for the rest of the chapter. In Subsection \ref{subsec:problem-statement}, we present the imposed optimization problem. In Section \ref{sec:methodology}, we present the methodology used to design the monetary incentives for each traveler. In Section \ref{sec:properties}, we study the properties of the mobility market, and finally, in Section \ref{sec:conclusion}, we draw conclusions and offer a discussion for future research.

\section{Theoretical Preliminaries}
\label{sec:theoretical-preliminaries}

In this section, we provide the theoretical preliminary material related to this chapter's proposed modeling framework, and we formally introduce all important concepts needed to prove our principal results.

\subsection{An Introduction to Mechanism Design}
\label{subsection:intro_mech_design}

Most generic control systems can be viewed as a specification of how decisions (e.g., how to utilize a number of resources) are determined as a function of the information that is known by the agents in the system. What interests us in most cases is \emph{efficiency}, i.e., realizing the best possible allocation of resources with the best use of information to achieve an outcome where collectively agents are satisfied, and there is no overutilization of the system's resources \cite{maskin2002}. One key challenge in ensuring efficiency in a control system is the fact that different agents may have conflicting interests and act selfishly. In other words, systems that incorporate human decision-making, if remained uninfluenced, are not guaranteed to exhibit optimal performance. This is well-known to be the case in control theory, and economics \cite{myerson2008,brown2017}. There are various different theories and approaches that attempt to guarantee efficiency in such systems and can provide solutions of varying degrees of success. One such theory is mechanism design, in which we are concerned with how to implement system-wide optimal solutions to problems involving multiple selfish agents, each with private information about their preferences \cite{nisan2001,diamantaras2009}. Within the context of mobility, agents are the travelers, and their private information can be either tolerance to traffic delays, value of time, preferred travel time, or any disposition to a specific mode of transportation. Our goal in mechanism design is to design appropriate incentives in order to align the interests of agents with the interests of the system \cite{hurwicz2006}. For example, in mobility, given that each traveler/driver/passenger ``competes'' with everyone else to reach their destination first, we want to ensure that given this inherent conflict of interest, we can still guarantee uncongested roads, no traffic accidents, and no travel time delays. Mechanism design can help us design the rules of systems where information is decentralized (different agents know different things), and agents do not necessarily have an immediate incentive to cooperate \cite{borgers2015}. In particular, mechanism design helps us design rules that align all agents' decision-making by providing the right incentives to achieve a well-defined objective for the system (e.g., aggregate optimal performance, system-level efficiency). Thus, mechanism design entails solving an optimization problem with sometimes unverifiable and always incomplete information structure \cite{maskin2008}. We call such a problem \emph{an incentive design and preference elicitation problem}.

We start by supposing that there is a system consisted of a finite group of agents, each competing with each other for a limited, fixed allocation of resources. Each agent evaluates different allocations based on some private information that is known only to them. We consider a \emph{social planner}, playing the role of a centralized entity, whose task is to align the selfish and conflicting interests of the agents with the overall system's objective (e.g., an efficient allocation of resources or the maximization of social welfare). As it can be seen in Fig. \ref{fig:mechanism_design_visualization}, there are four components: (1) There is a group of decision-makers, (2) who make a decision based on their personal information, and (3) that decision is reported as a message to the social planner who is tasked to design the rules of which (4) it can be determined what each agent gets. What follows next is a mathematically formal presentation of the social planner's task.

\begin{figure}
    \centering
    \includegraphics[scale = 0.3]{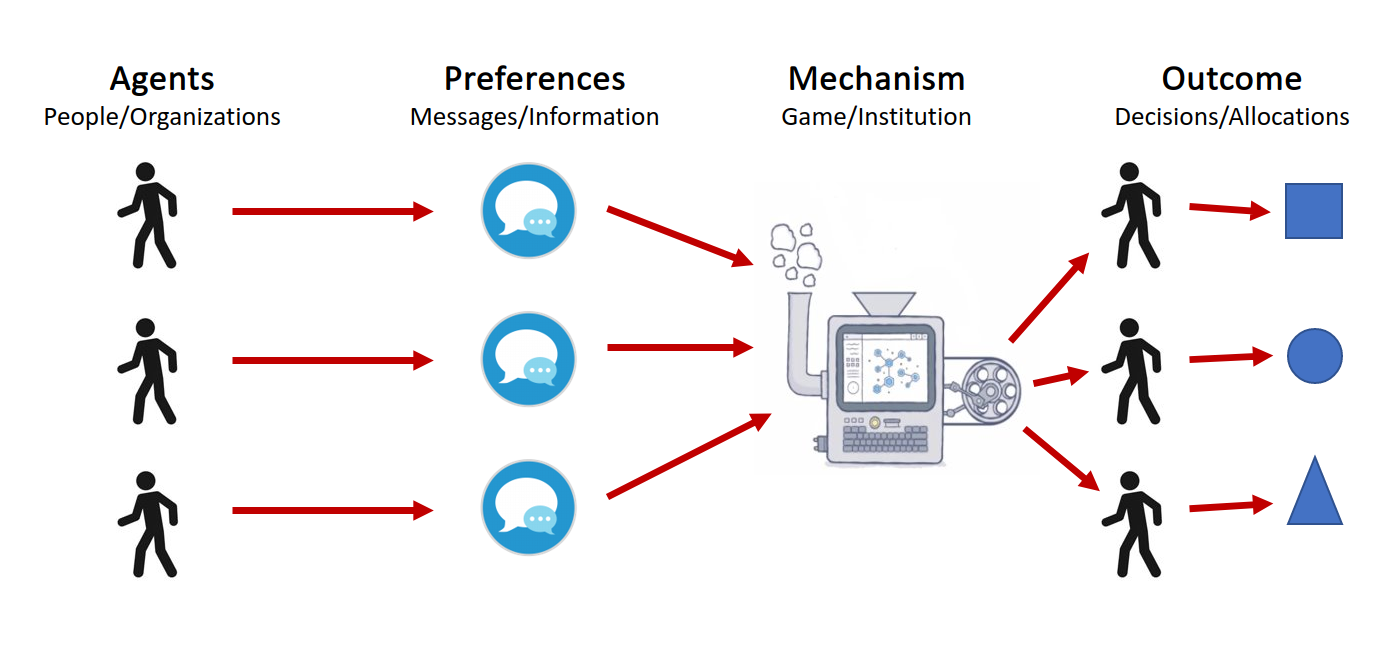}
    \caption{A visualization of how an arbitrary control system (agents, preferences, allocations) can be viewed under a mechanism design framework. Agents hold private information, of which they send reports to the social planner who is responsible for designing a mechanism. How efficient the mechanism is can depend on whether the agents' messages are truthful or not.}
    \label{fig:mechanism_design_visualization}
\end{figure}

Consider a set of selfish agents $\mathcal{I}$, $|\mathcal{I}| = n \in \mathbb{N}$ with preferences over different outcomes in a set $\mathcal{O}$. Each agent Each agent $i \in \mathcal{I}$ is assumed to possess private information, denoted by $\theta_i \in \Theta_i$. Since an agent $i$'s $\theta_i$ can characterize and influence their decision-making in a significant way, we call $\theta_i$ the \emph{type} of agent $i$. We write $(\theta_i)_{i \in \mathcal{I}} = \theta \in \Theta = \prod_{i \in \mathcal{I}} \Theta_i$ to represent the type profile of all agents. Next, an agent $i$'s preferences over different outcomes can be represented by a utility function $u_i : \mathcal{O} \times \Theta_i \to \mathbb{R}$. Although the exact form of $u_i$ can vary depending on the application of the problem \cite{tavafoghi2016,tang2011,bitar2016,kazumura2020}, what is common in the literature \cite{shoham2008,borgers2015,nisan2007} is a quasilinear function of the form
    \begin{equation}\label{eqn:defn_utility}
        u_i(o, \theta_i) = v_i(o, \theta_i) - p_i,
    \end{equation}
where $v_i : \mathcal{O} \times \Theta_i \to \mathbb{R}_{\geq 0}$ represents an arbitrary valuation function, and $p_i \mapsto \mathbb{R}$ is a monotonically increasing function. If outcome $o \in \mathcal{O}$ represents an allocation of a resource, then $p_i$ can be thought of as a transfer of agent $i$'s wealth or a cost imposed to agent $i$ for that particular allocation $o$. Intuitively, a quasilinear function defined as in \eqref{eqn:defn_utility} ensures that the marginal value of $v_i$ does not depend on how large $p_i$ becomes, and vice-versa. Furthermore, \eqref{eqn:defn_utility} assumes $u_i$ is linear with respect to $p_i$. We can now naturally define the \emph{social welfare} as the collective summation of all agents' valuations, i.e.,
    \begin{equation}\label{eqn:social_welfare}
        w(o, \theta) = \sum_{i \in \mathcal{I}} v_i(o, \theta_i).
    \end{equation}
If our system objective is to maximize $w$, then immediately we observe that there is an important obstacle, i.e., any agent $i$ may misreport their type $\theta_i$ in the hopes to increase their own utility. So, the question is now: How can we incentivize agents to truthfully report their type? The answer is through the appropriate design of $p_i$. Next, we outline the building blocks that can help us design $p_i$. Formally, we can define a \emph{mechanism} as the tuple $\langle f, p \rangle$ composed of a \emph{social choice function} (SCF) $f : \Theta \to \mathcal{O}$ and a vector of \emph{payment functions} $p = (p_i)_{i \in \mathcal{I}}$, with $p_i : \Theta \to \mathbb{R}$. In words, a mechanism $\langle f, p \rangle$ defines the rules of which we can implement a system objective by mapping the agents' types to an outcome while using the payments to ensure the optimality or efficiency of that outcome (see Fig. \ref{fig:mechanism_design_framework} for an illustration of the mechanism design framework). We can now state the social planner's problem as follows
    \begin{gather}
        \max_{o \in \mathcal{O}} w(o, \theta) \label{eqn:obj_function} \\
        \text{subject to: } \hat{\theta}_i = \theta_i, \quad \forall i \in \mathcal{I}, \label{eqn:constraint_truth} \\
        \sum_{i \in \mathcal{I}} v_i(o, \theta_i) \geq \sum_{i \in \mathcal{I}} v_i(o ', \theta_i), \quad \forall o ' \in \mathcal{O}, \label{eqn:constraint_efficiency} \\
        \sum_{i \in \mathcal{I}} p_i(s(\theta)) \geq 0, \quad \forall \theta \in \Theta, \label{eqn:constraint_budget} \\
        v_i(f(s(\theta))) - p_i(s(\theta)) \geq 0, \quad \forall i \in \mathcal{I}, \forall \theta \in \Theta, \label{eqn:constraint_IR}
    \end{gather}
where $\hat{\theta}_i$ denotes the reported type of agent $i$, $s(\cdot)$ is the equilibrium strategy profile (e.g., Nash equilibrium). Constraints \eqref{eqn:constraint_truth} ensure the truthfulness in the agents' reported types, \eqref{eqn:constraint_efficiency} impose an efficiency condition, \eqref{eqn:constraint_budget} make certain that no external payments are required, and \eqref{eqn:constraint_IR} incentivize all agents to voluntarily participate in the mechanism. If we could know for certain the true types of all agents, then we would be able solve the optimization problem \eqref{eqn:obj_function} - \eqref{eqn:constraint_IR} using standard optimization techniques. However, as this is unreasonable to expect from selfish decision-makers, the social planner needs to elicit $\theta = (\theta_i)_{i \in \mathcal{I}}$ by designing the appropriate $p = (p_i)_{i \in \mathcal{I}}$. We discuss in the next subsection one such mechanism that elicits the private information of agents truthfully.

\begin{figure}
    \centering
    \includegraphics[scale = 0.58]{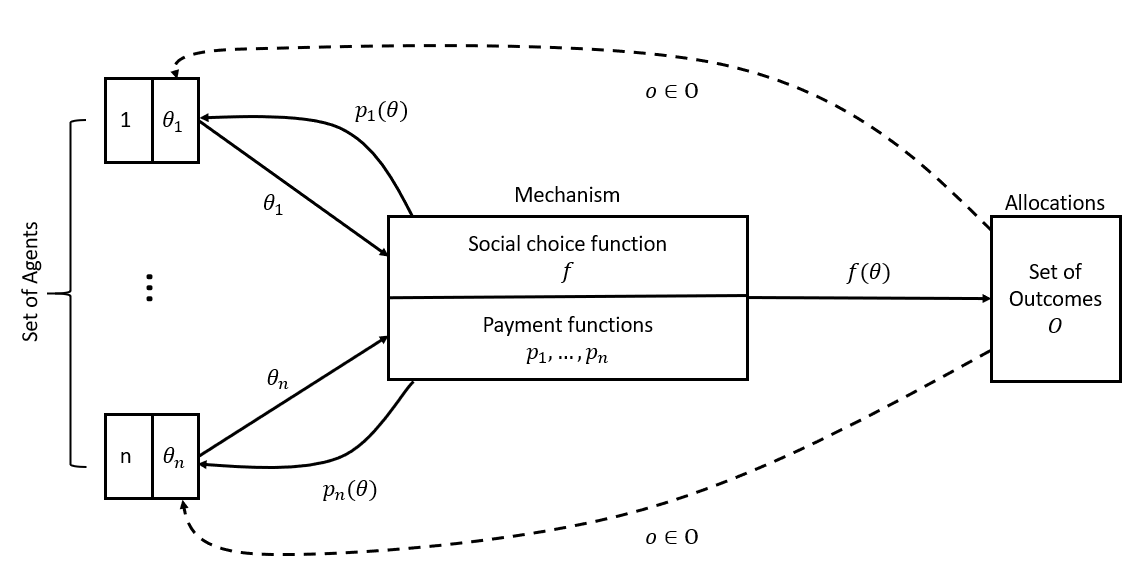}
    \caption{A theoretical representation of the mechanism design framework.}
    \label{fig:mechanism_design_framework}
\end{figure}

\subsection{The Vickrey-Clarke-Groves Mechanism}
\label{subsection:VCG_mechanism}

In the previous subsection, we reviewed the main concepts of mechanism design and formulated the incentive design and preference elicitation problem. In words, we asked ``How can we design the payments $p = (p_i)_{i \in \mathcal{I}}$ so that every agent makes the decision that agrees with what \emph{we} have chosen as the system's objective (e.g., efficiency)? To answer this question, in this subsection, we review the Vickrey-Clarke-Groves (VCG) mechanism \cite{vickrey1961,clarke1971,groves1973}, one of the most successful mechanisms as it incentivizes agents to be truthful and guarantees efficiency.

As we discussed earlier, a mechanism is a tuple $\langle f, p \rangle$. In a VCG mechanism, the SCF $f$ is defined as an allocation rule (who gets what) based on the optimization problem \eqref{eqn:obj_function} - \eqref{eqn:constraint_IR}, i.e.,
    \begin{equation}\label{eqn:scf}
        f(\hat{\theta}) = \arg \max_{o \in \mathcal{O}} W(o, \hat{\theta}_i).
    \end{equation}
where $\hat{\theta} = (\hat{\theta}_i)_{i \in \mathcal{I}}$. In words, assuming that the agents disclose their true information, \eqref{eqn:scf} provides to the social planner who attempts to maximize the social welfare a formal way to compute the allocations of each agent. At the same time, the VCG mechanism charges each agent for their allocation as follows
    \begin{equation}\label{eqn:payments}
        p_i(\hat{\theta}) = \sum_{j \neq i} v_j (f(\hat{\theta}_{- i})) - \sum_{j \neq i} v_j (f(\hat{\theta})),
    \end{equation}
where $\hat{\theta}_{- i}$ denotes the type profile of all agents except agent $i$. Note that the payments defined in \eqref{eqn:payments} do not depend on an agent $i$'s own declaration $\hat{\theta}_i$. Let us assume for a moment that all agents declare their types truthfully. Then, the first sum in \eqref{eqn:payments} computes the value of the social welfare with agent $i$ not participating in the mechanism. The second sum in \eqref{eqn:payments} computes the value of the social welfare of all other agents $j \neq i$ with agent $i$ participating in the mechanism. Thus, agent $i$ when they report $\hat{\theta}_i$ are made to pay the \emph{marginal effect} of their decision (in our case that is agent $i$'s reported type $\hat{\theta}_i$). In other words, this particular design of the payments in \eqref{eqn:payments} internalizes an agent $i$'s social externality, i.e., agent $i$'s impact on every other agents' welfare.

The VCG mechanism represented by the SCF $f$ defined by \eqref{eqn:scf} and the payment functions $p$ defined by \eqref{eqn:payments} satisfies the following properties:
    \begin{enumerate}
        \item For any agent, truth-telling is a strategy that dominates any other strategy that is available for that agent. We say then that truth-telling is a \emph{dominant strategy}. Note that such strategies are ``always optimal'' no matter what the other agents decide.
        \item The VCG mechanism successfully aligns the agents' individual interests with the system's objective. In our case, that objective was to maximize the social welfare of all agents. We call this property, \emph{economic efficiency}.
        \item For any agent, the VCG mechanism incentivizes them to voluntarily participate in the mechanism as no agent loses by participation (in terms of utility).
        \item The VCG mechanism ensures no positive transfers are made from the social planner to the agents. Thus, the mechanism does not incur a loss. We call this \emph{weakly budget balanced}.
    \end{enumerate}
The VCG mechanism essentially ensures the realization of a \emph{socially-efficient outcome}, i.e., satisfying properties 1 - 3, in a system of selfish agents, where each possesses private information. It is noteworthy to note how powerful the VCG mechanism is as it induces a dominant strategy equilibrium maximizing the social welfare while also making sure no agent is hurt by participating.

We conclude Section \ref{sec:theoretical-preliminaries} with the following remark: although the main motivation of mechanism design is the microeconomic study of institutions and relies heavily on game-theoretic techniques, it can prove a powerful theory providing a systematic methodology in the design of systems of asymmetric information, consisted of strategic decision-makers, and whose performance must attain a specified system objective. The rest of the chapter shall present how we can use this theory to design a socially-efficient mobility system consisting of travelers who compete with each other for the utilization of a limited number of mobility services.

\section{The Emerging Mobility Market}
\label{sec:mobility-market}

We consider an emerging mobility system consisting of a transportation city network managed by a social planner and a finite group of travelers who seek to travel in the network. Informally, this network represents the high-level mobility connections of multiple and different city neighborhoods. In other words, we move away from the concept of ``personally-owned'' modes of transportation and focus our modeling towards mobility provided as a service. This means that a social planner (e.g., a central computer) offers travelers a unified gateway of public and private transportation providers capable of providing mobility solutions to manage and realize their trip. For example, travelers can plan their journey via a smartphone app by specifying their preferences (e.g., cost, time, and convenience) and their desired destination. The social planner then is tasked to offer a travel recommendation to each traveler, i.e., which mode of transportation to take. In addition, we consider that multiple and different travel options can be offered to each traveler focusing on urban modes of transportation (e.g., CAVs, bus, train). We call these options ``mobility services'' or ``services'' for short. Within this framework, we propose a mobility market for a socially-efficient implementation of connectivity and automation in an emerging mobility system. The goal of the mobility market is twofold: (i) ensure that all travelers voluntarily participate and truthfully report their travel preferences, and (ii) be economically sustainable by generating revenue from each traveler and setting a minimum acceptable mobility payment for each traveler to potentially cover the operating costs.

\subsection{Mathematical Formulation of the Emerging Mobility Market}
\label{subsec:formulation}

The proposed mobility market is managed by a social planner who aims to allocate $m \in \mathbb{N}$ mobility services to $n \in \mathbb{N}$ travelers, where $n \geq m \neq 0$. We denote the set of travelers by $\mathcal{I}$, $|\mathcal{I}| = n$ and the set of mobility services by $\mathcal{J}$, $|\mathcal{J}| = m$. For example, each service $j \in \mathcal{J}$ can either represent a shared CAV, a train, or a bus. Both sets $\mathcal{I}$ and $\mathcal{J}$ are nonempty, disjoint, and finite. The set of all mobility services $\mathcal{J}$ can be partitioned to a finite number of disjoint subsets, each representing a specific ``type'' of a mobility service, i.e., $\mathcal{J} = \bigcup_{h = 1} ^ H \mathcal{J}_h$, where $H \in \mathbb{N}$ is the number of subsets of $\mathcal{J}$. For example, $\mathcal{J} = \mathcal{J}_{1} \cup \mathcal{J}_{2}$, where $|\mathcal{J}_1|$ represents the number of all available CAVs, and $|\mathcal{J}_2|$ represents the number of all available busses. Next, travelers seek to travel using these mobility services in a transportation network represented by an undirected multigraph $\mathcal{G} = (\mathcal{V}, \mathcal{E})$, where each node in $\mathcal{V}$ represents a different city area or neighborhood, and each link $e \in \mathcal{E}$ represents a sequence of city roads or a public transit connection. For our purposes, we think of $\mathcal{G} = (\mathcal{V}, \mathcal{E})$ as a representation of a smart city network with a road and public transit infrastructure. In $\mathcal{G}$, a traveler $i \in \mathcal{I}$ seeks to travel from their current location $o_i \in \mathcal{V}$ to their desired destination $d_i \in \mathcal{V}$. So, on one hand, each traveler $i \in \mathcal{I}$ is associated with a origin-destination pair $(o_i, d_i)$. On the other hand, each type of mobility services (e.g., one type is shared CAVs, another is trains) is associated with a unique link that connects any two nodes. At the same time, we do not limit the number of different mobility services that connect any origin $o_i$ to any destination $d_i$ of any traveler $i \in \mathcal{I}$. We suppose that any traveler $i \in \mathcal{I}$ has at least two travel options for their origin-destination pair $(o_i, d_i)$. Furthermore, each traveler $i \in \mathcal{I}$ can travel in $\mathcal{G}$ with any mobility service $j \in \mathcal{J}$ that satisfies their origin-destination pair $(o_i, d_i)$ and each service $j \in \mathcal{J}$ can be used by multiple travelers.

\begin{remark}\label{rmk:upper_level_graph}
    Network $\mathcal{G}$ represents the upper-level connections of different city neighborhoods. By connections, we mean either roads or public transit routes. Instead of modeling each node to represent travelers' exact location, we consider dividing a city into zones. By grouping travelers' exact locations into such zones, we can use network $\mathcal{G}$ to model the mobility connections between the different city zones.
\end{remark}

Next, we partition the set of travelers $\mathcal{I}$ into different smaller subsets characterized by a common origin-destination pair.

\begin{definition}\label{DEFN:traveler-partition}
    The set of travelers with the exact same origin-destination pair is $\mathcal{I}_k = \{i \in \mathcal{I} \; | \; (o_i, d_i) = (o_k, d_k)\}$, $k = 1, 2, \dots, K$, where $K \in \mathbb{N}$ is the number of subclasses over the complete set of travelers, i.e., $\mathcal{I} = \bigcup_{k = 1} ^ K \mathcal{I}_k$.
\end{definition}

The justification of Definition \ref{DEFN:traveler-partition} is that in an emerging mobility system, we can acquire verifiable location data of travelers either by using a global positioning system or estimating the average number of travelers using public transit \cite{coleri2004,tubaishat2009}.

Mathematically, the allocation of the finite number of mobility services to travelers can be described by a vector of binary variables.

\begin{definition}\label{defn:traveler_service_assign}
    The \emph{traveler-service assignment} is a vector $\mathbf{a} = (a_{i j})_{{i \in \mathcal{I}}, j \in \mathcal{J}}$, where $a_{i j}$ is a binary variable of the form:
        \begin{equation}\label{EQN:binary-variable}
            a_{i j} =
                \begin{cases}
                    1, \; & \text{if $i \in \mathcal{I}$ is assigned to $j \in \mathcal{J}$}, \\
                    0, \; & \text{otherwise}.
                \end{cases}
        \end{equation}
\end{definition}

Note that we have $(a_{i j})_{{i \in \mathcal{I}}, j \in \mathcal{J}} = (a_{1 1}, \dots, a_{i j}, \dots, a_{n m})$. By partitioning the set of travelers in $K \in \mathbb{N}$ subclasses, the traveler-service assignment of subclass $\mathcal{I}_k$ is given by $\mathbf{a}_k = (a_{i j})_{{i \in \mathcal{I}_k}, j \in \mathcal{J}}$.

Naturally, we need to impose a physical limit on the use of each mobility service $j \in \mathcal{J}$ in network $\mathcal{G}$ as well as a connection capacity of a mobility service for each link in the network. Note that each link in $\mathcal{G}$ represents a road or a public transit connection, which means that multiple mobility services of one type use that one link. For example, one link can be a bus lane with stops between two different city areas; another can be a train route between two stations.

\begin{definition}\label{DEFN:service-capacity}
    The \emph{usage capacity} of any mobility service $j \in \mathcal{J}$ is given by $\varepsilon_j \in \mathbb{N}$. The \emph{link capacity} in network $\mathcal{G}$ is given by $\gamma_e \in \mathbb{R}_{\geq 0}$.
\end{definition}

For example, $\varepsilon_j$ can represent the maximum number of travelers (or passengers) in a shared vehicle or the maximum number of travelers in a train vehicle (seated and standing). Similarly, $\gamma_e$ can represent a critical traffic density of mobility services, which means that any additional input of vehicles or trains can lead to a reduced traffic flow and eventually to traffic congestion. For example, we can use the GreenShields model to define explicitly the critical traffic density \cite{garber2014}.

As in any mobility problem that involves travelers, we need to consider the travelers' preferences (e.g., preferred travel time, value of time, willingness-to-pay for service). Hence, we formally define the notion of ``personal travel requirements'' by introducing three important parameters (our selection of those three parameters is justified by recent transportation studies \cite{polydoropoulou2020,ho2018}.)

\begin{definition}\label{defn:personal_travel_requir}
    For any traveler $i \in \mathcal{I}_k$, $k = 1, \dots, K$, let $\alpha_i \in (0, 1)$ be the \emph{value of time}, $\theta_i \in \mathbb{R}_{\geq 0}$ the \emph{preferred travel time}, and $\bar{v}_i \in \mathbb{R}_{\geq 0}$ the \emph{maximum willingness-to-pay}. Then, the \emph{personal travel requirements} of traveler $i$ is a tuple of the form $\pi_i = (\alpha_i, \theta_i, \bar{v}_i)$.
\end{definition}

We offer the intuition behind each parameter: traveler $i$'s value of time $\alpha_i$ transforms the traveler's time urgency in monetary units as it can model, for example, the acceptable amount of compensation for lost time. Similarly, a traveler $i$'s preferred travel time $\theta_i$ is a non-negative real value representing how fast traveler $i$ wishes to reach their destination. The last term in $\pi_i$ captures how much traveler $i$ appraises a direct and completely convenient mobility service. For example, $\bar{v}_i$ can measure the maximum willingness-to-pay of traveler $i$ traveling with the fastest and most convenient service (e.g., taking a taxicab with no co-travelers) to their destination.

For each traveler $i \in \mathcal{I}_k$, the tuple $\pi_i$ is considered private information, known only to traveler $i$. Hence, as the social planner does not know $(\pi_i)_{i \in \mathcal{I}}$, each traveler $i$ must report their $\pi_i$. This is one of the key challenges in the proposed mobility market: \emph{How can we incentivize the travelers to be truthful and elicit the private information needed to provide a socially-efficient solution to the emerging mobility market?} The answer to this question will be given in Section \ref{sec:methodology}.

Next, we introduce an ``inconvenience'' metric for any traveler $i \in \mathcal{I}_k$ using any mobility service $j \in \mathcal{J}$. Quantitatively, the inconvenience metric can represent the extra monetary value of travel disutility from any costs, travel delays, or violation of personal travel requirements caused by the use of a mobility service.

\begin{definition}\label{DEFN:inconvenience-metric}
    The \emph{mobility inconvenience metric} for traveler $i \in \mathcal{I}_k$, $k = 1, \dots, K$, assigned to service $j \in \mathcal{J}$ is a continuous, increasing, and convex function $\phi_i \left( \alpha_i, \theta_i, \tilde{\theta}_i(\mathbf{a}_k) \right) \mapsto \mathbb{R}_{\geq 0}$, where $\tilde{\theta}_i(\mathbf{a}_k) \in \mathbb{R}_{\geq 0}$ is the experienced travel time.
\end{definition}

Note that the mobility inconvenience metric $\phi_i$ increases when $\tilde{\theta}_i(\mathbf{a}_k)$ increases. From a modeling perspective, traveling with time delays or during peak times can cause significant inconveniences to any traveler $i \in \mathcal{I}_k$. Although, an exact form of $\phi$ is beyond the scope of this chapter, our definition of $\phi$ is consistent with general inconvenience functions in the literature \cite{dumas1990,cordeau2007}.

Next, a traveler $i$'s satisfaction is captured by a valuation function $v_i$, which can reflect the traveler's \emph{willingness-to-pay} for their travel, i.e.,
    \begin{equation}\label{EQN:mobility-valuation}
        v_i(\mathbf{a}_k) = \bar{v}_i - \phi_i \left( \alpha_i, \theta_i, \tilde{\theta}_i(\mathbf{a}_k) \right),
    \end{equation}
where $\bar{v}_i \in \mathbb{R}_{\geq 0}$ is the value gained by traveler $i \in \mathcal{I}_k$ when their origin-destination pair $(o_i, d_i)$ is satisfied using service $j \in \mathcal{J}$ without any travel delays, i.e., $\theta_i = \tilde{\theta}_i(\mathbf{a}_k)$. Naturally, for any traveler $i$ and any service $j$, we have $v_i(\mathbf{a}_k) \in [0, \bar{v}_i]$, where $v_i(\mathbf{a}_k) = 0$ means that traveler $i$ is unwilling to use service $j$. Below we summarize the two extreme cases and their interpretation:
    \begin{equation}\label{eqn:cases_valuation}
        v_i(\mathbf{a}_k) =
            \begin{cases}
                \bar{v}_i, & \text{if } \phi_i = 0, \\
                0, & \text{if } \phi_i = \bar{v}_i.
            \end{cases}
    \end{equation}
When $\phi_i = 0$, we say that traveler $i$ travels to their destination in the fastest and most convenient mobility service offered by the mobility market (e.g., a taxicab with no co-travelers). On the other hand, when $\phi_i = \bar{v}_i$, we say that traveler $i$'s personal travel requirements are not satisfied, and the traveler is most inconvenienced with regards to mobility.

Although our analysis can treat the valuation function $v_i$ in its most general form, given by \eqref{EQN:mobility-valuation}, we explicitly define the second component of \eqref{EQN:mobility-valuation} in our mathematical exposition. Thus, the explicit form for the inconvenience mobility metric $\phi_i$ is
    \begin{equation}\label{EQN:explicit-inconvenience-metric}
        \phi_i \left( \alpha_i, \theta_i, \tilde{\theta}_i(\mathbf{a}_k) \right) = \alpha_i \cdot (\tilde{\theta}_i(\mathbf{a}_k) - \theta_i),
    \end{equation}
Basically, \eqref{EQN:explicit-inconvenience-metric} gives the monetary value of the difference between the travel times (experienced vs preferred), and can be interpreted as the travel time tolerance that the traveler can accept (in monetary units).

In our modeling framework, the total utility $u_i(\mathbf{a}_k)$ of traveler $i \in \mathcal{I}_k$, $k = 1, \dots, K$, is given by
    \begin{equation}\label{EQN:traveler-utility}
        u_i(\mathbf{a}_k) = v_i(\mathbf{a}_k) - p_i(\mathbf{a}_k),
    \end{equation}
where $v_i(\mathbf{a}_k)$ is the willingness-to-pay and $p_i(\mathbf{a}_k) \in \mathbb{R}_{\geq 0}$ is the mobility payment that traveler $i$ is required to make to use service $j \in \mathcal{J}$ (e.g., pay road tolls or buy a public transit fare). Hence, \eqref{EQN:traveler-utility} establishes a ``quasi-linear'' relationship between a traveler's satisfaction and payment, both measured in monetary units \cite{mas_colell1995}.

In contrast to the traveler's satisfaction, we also introduce an ``operating cost'' to capture the needed investment that public and private mobility providers and operators make to ensure the proper function of their mobility services.

\begin{definition}\label{DEFN:operating-cost}
    The operating cost of service $j \in \mathcal{J}$ can be computed by
        \begin{equation}\label{eqn:operating_cost}
            c_j(\mathbf{a}_k) = \sum_{i \in \mathcal{I}_k} c_{i j}(a_{i j}),
        \end{equation}
    where $c_{i j}(a_{i j}) \in \mathbb{R}_{\geq 0}$ is traveler $i$'s corresponding share of the operating cost of vehicle $j \in \mathcal{J}$. In the case of $a_{i j} = 0$, we have $c_{i j} = 0$.
\end{definition}

Intuitively, the operating cost $c_{i j}$ captures traveler $i$'s fair share of the costs of service $j \in \mathcal{J}$. These costs can be associated with fuel/energy consumption, drivers' labor reimbursement, maintenance, and environmental impact.

\begin{definition}\label{DEFN:mobility-market}
    Given the traveler-service assignment $\mathbf{a}_k = (a_{i j})_{i \in \mathcal{I}_k, j \in \mathcal{J}}$, the travelers' payments are given by the vector $\mathbf{p}_k = (p_i(a_{i j}))_{i \in \mathcal{I}_k, j \in \mathcal{J}}$. Then, for a subclass $\mathcal{I}_k$, $k = 1, \dots, K$, the proposed mobility market can be fully described by the tuple
        \begin{equation}\label{EQN:mobility-market}
            \big\langle \mathcal{I}_k, \mathcal{J}, (\pi_i)_{i \in \mathcal{I}_k}, (u_i)_{i \in \mathcal{I}_k}, \mathbf{a}_k, \mathbf{p}_k \big\rangle,
        \end{equation}
    where $(\pi_i)_{i \in \mathcal{I}_k}$ is considered private information (unknown to the social planner), and the experienced travel time $\tilde{\theta}_i$ and operation costs $c_j$ of all mobility services are considered known to the social planner.
\end{definition}

Note that in Definition \ref{DEFN:mobility-market}, we have also defined the informational structure of the proposed market. The operation costs $(c_j)_{j \in \mathcal{J}}$ are considered public information as well as the minimum acceptable mobility payments $(\sigma_i)_{i \in \mathcal{I}}$. In general, though, any VCG-based mechanism requires agents to report their entire valuation function \cite{sanghavi2008}. In our case, we can take advantage of more advanced and sophisticated data gathering techniques so that we may infer the form and shape of a traveler's valuation (and utility) function using, for example, historical and empirical data \cite{carrasco2008,abou_zeid2011}. Hence, the functional form of $v_i$ can be considered known, but the realization of $v_i(\cdot)$ is agent $i$'s private information. It is important to note that the evaluation of any traveler $i$'s valuation function can be learned using the three-parameter tuple $\pi_i$, which provides the personal travel requirements of any traveler $i \in \mathcal{I}_k$. In addition, we expect any social planner of a generic transportation system to have the ability (e.g., using regression analysis \cite{ben_akiva2018}) to approximate the experienced travel time of any mobility service and its operating costs. Hence, the only private information that we are required to elicit from the travelers is their personal travel requirements $(\pi_i)_{i \in \mathcal{I}_k}$, $k = 1, \dots, K$. At the same time, receiving communication in the form of messages from all travelers regarding the $(\pi_i)_{i \in \mathcal{I}_k}$, $k = 1, \dots, K$ can be an unrealistic burden. That is why, in our framework, any traveler $i \in \mathcal{I}$ is expected to report the evaluation of their valuation function $v_i$, which depends on their $\pi_i$. Essentially, we parameterize the private information of travelers into a one-dimensional number. In future research, we plan to address a multi-dimensional mechanism to ensure there is no loss of information of the traveler's preferences.

On a different note, a natural question to ask here is whether there is any guarantee that the travelers' mobility payments will meet the providers' operating costs. As we saw in Section \ref{sec:theoretical-preliminaries}, the VCG mechanism can only charge travelers their social cost or impact into the mobility system. Thus, this might lead to very low mobility payments for a significant number of travelers, leading to deficits to cover operating costs for the providers. Since, in reality, we cannot expect any providers to serve travelers indefinitely when their costs have not been met, we introduce a ``pricing base'' for the mobility payments. Essentially, these bases can be chosen by the providers to ensure that no payment by any traveler is below a set value (e.g., minimum acceptable payment), which can be determined approximately by the traveler's location and destination, supply and demand, and operator's reimbursement fee \cite{hall2015}.

\begin{definition}\label{DEFN:sustainability}
    The \emph{minimum acceptable mobility payment} of any service $j \in \mathcal{J}$ is given by $\sigma_i(\mathbf{a}_k) \in \mathbb{R}_{\geq 0}$, for any traveler $i \in \mathcal{I}_k$, $k = 1, \dots, K$. If for an arbitrary traveler $i$, we have $p_i(\mathbf{a}_k) \geq \sigma_i(\mathbf{a}_k)$, then we say that the mobility market, defined in \eqref{EQN:mobility-market}, is \emph{economically sustainable}.
\end{definition}

The minimum acceptable mobility payments $\sigma = (\sigma_i)_{i \in \mathcal{I}}$ are considered public information set by the providers and may be different for each traveler $i \in \mathcal{I}_k$, $k = 1, \dots, K$.

In the modeling framework described above, we impose the following assumption:

\begin{assumption}\label{ASM:strategic-setting}
    For all subclasses $\mathcal{I}_k$, $k = 1, \dots, K$, $K \in \mathbb{N}$, any traveler $i \in \mathcal{I}_k$ is modeled as a selfish decision-maker with private information $\pi_i = (\alpha_i, \theta_i, \bar{v}_i)$. Traveler $i$'s objective is to maximize their total utility $u_i(\mathbf{a}_k) = v_i(\mathbf{a}_k) - p_i(\mathbf{a}_k)$ in a non-cooperative game-theoretic setting.
\end{assumption}

Assumption \ref{ASM:strategic-setting} essentially says that each traveler is selfish in the sense that they are only interested in their own well-being. In economics, such behavior is called ``strategic'' since agents attempt to misreport their private information to the social planner if that means higher individual benefits.

\begin{assumption}\label{ASM:supply_demand}
    The aggregate usage capacities of all mobility services can adequately serve all travel requests of travelers. Mathematically, we have $\sum_{j \in \mathcal{J}} \varepsilon_j = n = |\mathcal{I}|$.
\end{assumption}

Intuitively, Assumption \ref{ASM:supply_demand} ensures that no traveler will remain unassigned. We can justify this assumption as follows: our focus is on efficiently allocating the different mobility services to travelers in a mobility market, a multimodal mobility system that incorporates public transit services with high traveler capacity capabilities. A relaxation of this assumption must consider scenarios where the existing mobility services cannot meet the travelers' demand, thus transforming our problem into a ``mobility and equity'' problem (giving priority to a subset of travelers in a fair way).

\subsection{The Optimization Problem Statement of the Emerging Mobility Market}
\label{subsec:problem-statement}

In the proposed mobility market, travelers request (via a smartphone app), in advance, a travel recommendation from the social planner that satisfies their origin-destination. Given the travelers' origin-destination pairs, the social planner partitions all travelers to different subclasses, as described in Definition \ref{DEFN:traveler-partition}. Thus, travelers from the same neighborhood have the same origin. Similarly, travelers going to the same neighborhood have the same destination. The social planner's task is to elicit the travelers' preferences, attempt to satisfy all travel requests, and provide recommendations to the travelers (e.g., which mobility service to use) by considering the social optimum subject to the city network's physical constraints. Hence, we are interested in minimizing the travel inconvenience of all travelers and the operating costs.

\begin{remark}
    Without loss of generality and to simplify the mathematical analysis in our exposition, we consider that both the mobility inconvenience metrics $(\phi_i)_{i \in \mathcal{I}_k}$, $k = 1, \dots, K$, the minimum mobility payments $(\sigma_i)_{i \in \mathcal{I}_k}$, $k = 1, \dots, K$, and the operating costs $(c_j)_{j \in \mathcal{J}}$ are normalized. This ensures that $\phi_i$, $\sigma_i$, and $c_j$ do not dominate each other in Problem \ref{PROB:centralized-problem} next, while all three are measured in the same monetary units.
\end{remark}

\begin{problem}\label{PROB:centralized-problem}
    For each subclass $\mathcal{I}_k$, $k = 1, \dots, K$, the optimization problem is
        \begin{gather}
            \min_{\mathbf{a}_k} \sum_{i \in \mathcal{I}_k} \left[ \phi_i \left( \alpha_i, \theta_i, \tilde{\theta}_i(\mathbf{a}_k) \right) + \sigma_i(\mathbf{a}_k) \right] + \sum_{j \in \mathcal{J}} c_j(\mathbf{a}_k), \label{EQN:prob1-objective} \\
            \text{subject to:} \notag \\
            \sum_{j \in \mathcal{J}} a_{i j} \leq 1, \quad \forall i \in \mathcal{I}_k, \label{CSTR:user-at-most-one-service} \\
            \sum_{i \in \mathcal{I}_k} a_{i j} \leq \varepsilon_j, \quad \forall j \in \mathcal{J}, \label{CSTR:service-not-violate-capacity} \\
            \sum_{j \in \mathcal{J}_h} \sum_{i \in \mathcal{I}_k} a_{i j} \leq \gamma_e, \quad \forall h \in \{1, 2, \dots, H\}, \quad \forall e \in \mathcal{E}, \label{CSTR:total-car-services-not-violate-capacity}
        \end{gather}
    where \eqref{CSTR:user-at-most-one-service} assures that each traveler $i \in \mathcal{I}_k$ will be assigned at most one mobility service, and \eqref{CSTR:service-not-violate-capacity} stipulates that service $j$'s maximum usage capacity $\varepsilon_j$ must not be exceeded. Lastly, \eqref{CSTR:total-car-services-not-violate-capacity} ensures that there will be no congestion on the links that represent roads or public transit connections. Note also that even though in Problem \ref{PROB:centralized-problem} we focus only on the $k$th partition of the set of travelers $\mathcal{I}$, we do not need to do the same for the mobility services. In other words, since each type of mobility services is associated with a unique link that connects any two nodes, any services that do not satisfy $(o_k, d_k)$ will not be considered in the optimization.
\end{problem}

Problem \ref{PROB:centralized-problem} is similar to the many-to-one assignment problem, and standard algorithmic approaches (e.g., Jonker-Volgenant algorithm \cite{jonker1987}) exist to find its global optimal solution or, in worst-case scenarios, a second-best optimal approximation of a solution. We can also reformulate Problem \ref{PROB:centralized-problem} to a linear program by relaxing to a non-negativity constraint the binary optimization variable $a_{i j}$ for all $i \in \mathcal{I}$ and $j \in \mathcal{J}$. We can then guarantee that an optimal solution of zeros and ones exists by noting that the constraint matrix formed by \eqref{CSTR:user-at-most-one-service} - \eqref{CSTR:total-car-services-not-violate-capacity} satisfies the total unimodularity property \cite{schrijver1998}. Note, though, that these approaches assume complete information of all parameters and variables in the model. Such an assumption is unreasonable to expect from strategic decision-makers, so, in our framework, travelers are not expected to report their private information truthfully. This turns our problems to a \emph{preference elicitation problem}. Our task in Section \ref{sec:methodology} is to provide a theoretical approach that elicits the necessary private information of all travelers using monetary incentives in the form of mobility payments (e.g., tolls, fares, fees).

\section{Methodology for the Design of Mobility Incentives}
\label{sec:methodology}

We can reformulate Problem \ref{PROB:centralized-problem} as a standard social welfare maximization problem. First, recall that $\phi_i \left( \alpha_i, \theta_i, \tilde{\theta}_i(\mathbf{a}_k) \right) = \bar{v}_i - v_i(\mathbf{a}_k)$, so the objective function \eqref{EQN:prob1-objective} becomes
    \begin{equation}
        \max_{\mathbf{a}_k} \sum_{i \in \mathcal{I}_k} \left[ v_i(\mathbf{a}_k) - \sigma_i(\mathbf{a}_k) \right] - \sum_{j \in \mathcal{J}} c_j(\mathbf{a}_k).
    \end{equation}
This reformulation will prove useful as the design of the monetary payments relies on the social welfare impact (or mobility externality) caused by one traveler to the rest of the travelers in the proposed mobility market.

\begin{problem}\label{PROB:equivalency}
    We rewrite Problem \ref{PROB:centralized-problem} as follows:
        \begin{gather}
            \max_{\mathbf{a}_k} \sum_{i \in \mathcal{I}_k} \left[ v_i(\mathbf{a}_k) - \sigma_i(\mathbf{a}_k) \right] - \sum_{j \in \mathcal{J}} c_j(\mathbf{a}_k), \label{EQN:prob2-objective} \\
            \text{subject to:} \notag \\
            \sum_{j \in \mathcal{J}} a_{i j} \leq 1, \quad \forall i \in \mathcal{I}_k, \\
            \sum_{i \in \mathcal{I}_k} a_{i j} \leq \varepsilon_j, \quad \forall j \in \mathcal{J}, \\
            \sum_{j \in \mathcal{J}_h} \sum_{i \in \mathcal{I}_k} a_{i j} \leq \gamma_e, \quad \forall h \in \{1, 2, \dots, H\}, \quad \forall e \in \mathcal{E},
        \end{gather}
    where $\mathbf{a}_k = (a_{i j})_{{i \in \mathcal{I}_k}, j \in \mathcal{J}}$ denotes the solution of Problem \ref{PROB:equivalency}.
\end{problem}

In order for the solution of Problem \ref{PROB:equivalency} to be socially-efficient, we would need a control input in utility function \eqref{EQN:traveler-utility} to incentivize all travelers to report their personal travel requirements truthfully. In our case, this control input is the payments $\mathbf{p}_k$, $k = 1, \dots, K$, which can be designed to be the difference between the \emph{maximum social welfare with traveler $\ell \in \mathcal{I}_k$ not participating} and the \emph{maximum social welfare of other travelers with traveler $\ell$ participating}. Thus, to capture the first term, we revise Problem \ref{PROB:equivalency} by adding constraint \eqref{EQN:prob3-extra} to help us capture the ``mobility externality'' of traveler $\ell$ rejecting any travel recommendations from the social planner. For example, traveler $\ell$ may use a taxicab with no other co-travelers. Thus, Problem \ref{PROB:equivalency} takes the following form.

\begin{problem}\label{PROB:mobility-payments}
    For each traveler $i \in \mathcal{I}_k$, $k = 1, \dots, K$, we fix traveler $\ell \in \mathcal{I}_k$ and solve the following optimization problem:
        \begin{gather}
            \max_{\mathbf{b}_k} \sum_{i \in \mathcal{I}_k} \left[ v_i(\mathbf{b}_k) - \sigma_i(\mathbf{b}_k) \right] - \sum_{j \in \mathcal{J}} c_j(\mathbf{b}_k), \label{EQN:prob3-objective} \\
            \text{subject to:} \notag \\
            \sum_{j \in \mathcal{J}} b_{i j} \leq 1, \quad \forall i \in \mathcal{I}_k, \\
            \sum_{i \in \mathcal{I}_k} b_{i j} \leq \varepsilon_j, \quad \forall j \in \mathcal{J}, \\
            \sum_{j \in \mathcal{J}_h} \sum_{i \in \mathcal{I}_k} b_{i j} \leq \gamma_e, \quad \forall h \in \{1, 2, \dots, H\}, \quad \forall e \in \mathcal{E}, \\
            b_{\ell j} = 0, \quad \forall j \in \mathcal{J}, \label{EQN:prob3-extra}
        \end{gather}
    where $\mathbf{b}_k = (b_{i j})_{{i \in \mathcal{I}_k}, j \in \mathcal{J}}$ defined similarly as in \eqref{EQN:binary-variable} denotes the solution of Problem \ref{PROB:mobility-payments}, and \eqref{EQN:prob3-extra} states that traveler $\ell \in \mathcal{I}_k$ is not considered in the optimization problem.
\end{problem}

\begin{remark}
    In what follows, to simplify the mathematical exposition, we introduce the following notation:
        \begin{align}
            w_2(\mathbf{a}_k) & = \sum_{i \in \mathcal{I}_k} \left[ v_i(\mathbf{a}_k) - \sigma_i(\mathbf{a}_k) \right] - \sum_{j \in \mathcal{J}} c_j(\mathbf{a}_k), \\
            w_3(\mathbf{b}_k) & = \sum_{i \in \mathcal{I}_k} \left[ v_i(\mathbf{b}_k) - \sigma_i(\mathbf{b}_k) \right] - \sum_{j \in \mathcal{J}} c_j(\mathbf{b}_k),
        \end{align}
    where $w_2$ and $w_3$ denote the objective functions of Problems \ref{PROB:equivalency} and \ref{PROB:mobility-payments}, respectively.
\end{remark}

We can now propose the exact form of the mobility payment $p_{\ell}$ for an arbitrary traveler $\ell \in \mathcal{I}_k$, $k = 1, \dots, K$, of the proposed mobility market. For any subclass $\mathcal{I}_k$, $k = 1, \dots, K$, traveler $\ell \in \mathcal{I}_k$ makes the following payment:
    \begin{equation}\label{EQN:pricing-scheme}
        p_{\ell}(\mathbf{a}_k, \mathbf{b}_k) = w_3(\mathbf{b}_k) - \left[ w_2(\mathbf{a}_k) - v_{\ell}(\mathbf{a}_k) \right].
    \end{equation}
Since $w_3(\mathbf{b}_k)$ yields the maximum social welfare from the traveler-service assignment $\mathbf{b}_k$ when traveler $\ell \in \mathcal{I}_k$ does not participate in the mobility market, it can be viewed by traveler $\ell \in \mathcal{I}_k$ in \eqref{EQN:pricing-scheme} as a constant, regardless of what traveler $\ell$ reports to the social planner about their own personal travel requirements $\pi_{\ell}$. The term $\left[ w_2(\mathbf{a}_k) - v_{\ell}(\mathbf{a}_k) \right]$ in \eqref{EQN:pricing-scheme} represents the maximum social welfare of all travelers other than traveler $\ell \in \mathcal{I}_k$, when traveler $\ell \in \mathcal{I}_k$ partakes in the mobility market. As a consequence, $p_{\ell}$ can be interpreted as the externality caused by traveler $\ell \in \mathcal{I}_k$ to all other travelers. In addition, the computation of the mobility payments \eqref{EQN:pricing-scheme} requires solving Problem \ref{PROB:mobility-payments} repeatedly for each traveler. As shown in Algorithm \ref{ALG:pricing-mechanism}, first we derive the optimal solution of Problem \ref{PROB:equivalency}, and then we use the optimal solution of Problem \ref{PROB:mobility-payments} to compute the monetary payment of each traveler $\ell \in \mathcal{I}_k$.

\begin{algorithm}[ht]
\label{ALG:pricing-mechanism}
    \SetAlgoLined
    \KwData{$\mathcal{I}_k, \mathcal{J}, (\pi_i)_{i \in \mathcal{I}_k}, (u_i)_{i \in \mathcal{I}_k}$}
    \KwResult{$\mathbf{a}_k ^ *$ and $\mathbf{p}_k$}
        Solve for the optimal solution $\mathbf{a}_k ^ *$ of Problem \ref{PROB:equivalency}\;
    \For{$\ell \in \mathcal{I}_k$}{
    Solve for the optimal solution $\mathbf{b}_k ^ *$ of Problem \ref{PROB:mobility-payments}\;
    Next, compute
        \begin{equation*}
            p_{\ell}(\mathbf{a}_k ^ *, \mathbf{b}_k ^ *) = w_3(\mathbf{b}_k ^ *) - \left[ w_2(\mathbf{a}_k ^ *) - v_{\ell}(\mathbf{a}_k ^ *) \right].
        \end{equation*}
    }
    \caption{Solution of Problem \ref{PROB:equivalency} with Problems \ref{PROB:mobility-payments}}
\end{algorithm}

Before we move on to the next section, we note that informally we talked about a traveler not participating in the mobility market in solving Problem \ref{PROB:mobility-payments}. This idea helps us design the mobility payments in \eqref{EQN:pricing-scheme} by identifying the mobility externalities in the welfare of all travelers. Thus, we introduce the notion of ``mobility exclusion,'' which will help us capture the socioeconomic impact of any traveler on the rest of the mobility market.

\begin{definition}\label{DEFN:exclusion}
    For any subclass $\mathcal{I}_k$, $k = 1, \dots, K$, given a traveler-service assignment $\mathbf{a}_k$ of Problem \ref{PROB:equivalency}, a traveler $\ell \in \mathcal{I}_k$ is said to be \emph{mobility excluded} if they are not assigned to any mobility service in the traveler-service assignment $\mathbf{b}_k$ of Problem \ref{PROB:mobility-payments}.
\end{definition}

Problem \ref{PROB:mobility-payments} is used to compute the mobility payments for each traveler in the mobility market by identifying the mobility externality caused by the decision-making of the traveler to the rest of the market. In addition, however, we are also interested in identifying the traveler's impact on (i) operating costs and (ii) overall welfare. We shall see in the next section how we can achieve this.

\section{Properties of the Mobility Market}
\label{sec:properties}

Our first result is an immediate and straightforward consequence of Definition \ref{DEFN:exclusion}. Recall that the operating cost $c_{i j}(a_{i j})$ captures traveler $i$'s fair share of the mobility service $j$'s costs that they use under the traveler-service assignment $\mathbf{a}_k$.

\begin{corollary}\label{COR:transportation-costs-relation}
    Let $\mathbf{b}_k ^ {\ell}$ be a feasible traveler-service assignment of Problem \ref{PROB:mobility-payments}. Given that traveler $\ell \in \mathcal{I}_k$ is mobility excluded, the operating cost that is associated with the traveler-service assignment $\mathbf{b}_k ^ {\ell}$ is smaller than or equal than the operating cost associated with the optimal assignment $\mathbf{a}_k ^ *$ of Problem \ref{PROB:equivalency}, i.e., we have
        \begin{equation}
            \sum_{i \in \mathcal{I}_k} c_{i j}(a_{i j} ^ *) \geq \sum_{i \in \mathcal{I}_k \setminus \{\ell\}} c_{i j}(b_{i j} ^ {\ell}).
        \end{equation}
\end{corollary}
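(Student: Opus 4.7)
The plan is constructive: exhibit a specific feasible assignment for Problem \ref{PROB:mobility-payments} by restricting the optimal assignment $\mathbf{a}_k^*$ of Problem \ref{PROB:equivalency} to travelers other than $\ell$, and then compare operating costs term by term. The inequality will then reduce to non-negativity of $\ell$'s share of the operating cost, which is immediate from Definition \ref{DEFN:operating-cost}.

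First, I would define a candidate $\tilde{\mathbf{b}}_k$ by setting $\tilde{b}_{ij} = a_{ij}^*$ for every $i \in \mathcal{I}_k \setminus \{\ell\}$ and $j \in \mathcal{J}$, and $\tilde{b}_{\ell j} = 0$ for all $j \in \mathcal{J}$. By construction $\tilde{\mathbf{b}}_k$ satisfies the exclusion constraint \eqref{EQN:prob3-extra} and makes $\ell$ mobility excluded in the sense of Definition \ref{DEFN:exclusion}. Next I would verify the remaining constraints of Problem \ref{PROB:mobility-payments} one at a time: the per-traveler constraint $\sum_{j} \tilde{b}_{ij} \leq 1$ is inherited from $\mathbf{a}_k^*$ for $i \neq \ell$ and is trivial for $i = \ell$; the usage-capacity and link-capacity bounds hold because $\sum_{i \in \mathcal{I}_k} \tilde{b}_{ij} \leq \sum_{i \in \mathcal{I}_k} a_{ij}^*$, so zeroing out $\ell$'s row can only relax aggregate loading. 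Hence $\tilde{\mathbf{b}}_k$ is feasible for Problem \ref{PROB:mobility-payments}.

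Second, since $c_{ij}(0) = 0$ by Definition \ref{DEFN:operating-cost} and each $c_{ij}(\cdot) \geq 0$, a direct cost accounting gives
\[
\sum_{j \in \mathcal{J}} \sum_{i \in \mathcal{I}_k} c_{ij}(a_{ij}^*) \; - \; \sum_{j \in \mathcal{J}} \sum_{i \in \mathcal{I}_k \setminus \{\ell\}} c_{ij}(\tilde{b}_{ij}) \; = \; \sum_{j \in \mathcal{J}} c_{\ell j}(a_{\ell j}^*) \; \geq \; 0,
\]
which establishes the claimed inequality for the choice $\mathbf{b}_k^\ell = \tilde{\mathbf{b}}_k$.

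The main obstacle I anticipate is interpretive rather than technical. The phrase ``let $\mathbf{b}_k^\ell$ be a feasible traveler-service assignment'' could in principle be read as a universal quantifier, but the inequality would then fail for pathological feasible assignments that pack travelers into expensive services while still respecting capacity. The natural reading, consistent with how $\mathbf{b}_k^\ell$ is used in Algorithm \ref{ALG:pricing-mechanism} and in the payment formula \eqref{EQN:pricing-scheme}, is existential: the corollary guarantees that once $\ell$ is removed there exists a no-higher-cost feasible assignment, and the restriction of $\mathbf{a}_k^*$ is the canonical witness. Once this interpretation is fixed, the feasibility check and the cost-accounting identity close the argument with no further subtlety.
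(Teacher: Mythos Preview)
The paper does not supply a proof of this corollary; it is stated as ``an immediate and straightforward consequence of Definition \ref{DEFN:exclusion}'' and left at that. Your argument is a correct and careful elaboration of precisely that intuition: zero out $\ell$'s row in $\mathbf{a}_k^*$, note that feasibility is preserved, and observe that the cost drops by $\ell$'s non-negative share. This is the same approach the paper has in mind, just made explicit.

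Your interpretive remark is well taken and, in fact, clarifies something the paper glosses over. Read universally over all feasible $\mathbf{b}_k^\ell$, the inequality is false in general; but the paper's later proofs (Lemma \ref{LEM:economic-sustainability} and Theorem 3) always invoke the corollary for what they call ``the corresponding feasible solution'' of Problem \ref{PROB:mobility-payments} associated with $\mathbf{a}_k^*$, which is exactly your restriction $\tilde{\mathbf{b}}_k$. So your existential reading, with the restriction of $\mathbf{a}_k^*$ as the canonical witness, matches both the paper's intent and its subsequent usage.
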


Similarly, using Definition \ref{DEFN:exclusion}, we show that the sum of valuations (or welfare) of all travelers other than the traveler, who is mobility excluded specifically in Problem \ref{PROB:mobility-payments}, is greater or equal than the sum of valuations evaluated at the traveler-service assignment of Problem \ref{PROB:equivalency}.

\begin{lemma}\label{LEM:valuation-relation-priority}
    Let $\mathbf{b}_k ^ {\ell}$ be a feasible traveler-service assignment of Problem \ref{PROB:mobility-payments}, in which traveler $\ell \in \mathcal{I}_k$ is mobility excluded. Then, we have
        \begin{equation}\label{EQN:given-priority-relation}
            \sum_{i \in \mathcal{I}_k \setminus \{\ell\}} v_i(\mathbf{a}_k) \leq \sum_{i \in \mathcal{I}_k} v_i(\mathbf{b}_k ^ {\ell}).
        \end{equation}
\end{lemma}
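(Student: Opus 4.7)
The claim reduces to showing that the remaining travelers in $\mathcal{I}_k\setminus\{\ell\}$ are, in aggregate, no worse off in terms of valuation under $\mathbf{b}_k^{\ell}$ than under the original optimum $\mathbf{a}_k$, since the term $v_{\ell}(\mathbf{b}_k^{\ell})$ on the right-hand side is nonnegative by the bounds on $v_i$ in \eqref{eqn:cases_valuation}. The plan is to exhibit a concrete witness $\mathbf{b}_k^{\ell}$ obtained by excising $\ell$ from $\mathbf{a}_k$, and then invoke the monotonicity of each $\phi_i$ in the experienced travel time $\tilde{\theta}_i$ to argue that removing $\ell$ can only help every other traveler.

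First, I would construct $\mathbf{b}_k^{\ell}$ from $\mathbf{a}_k^{*}$ by setting $b_{\ell j}=0$ for every $j\in\mathcal{J}$ and $b_{ij}=a_{ij}^{*}$ for $i\neq\ell$, verifying feasibility of this assignment for Problem \ref{PROB:mobility-payments}: the sums in \eqref{CSTR:user-at-most-one-service}--\eqref{CSTR:total-car-services-not-violate-capacity} are sums of nonnegative entries and thus only decrease when $\ell$ is removed, and \eqref{EQN:prob3-extra} holds by construction. Second, fix any $i\in\mathcal{I}_k\setminus\{\ell\}$ and let $j_i$ be the service $i$ uses. Removing $\ell$'s contribution lowers (or leaves unchanged) the occupancy of $j_i$ and the flow on the links serving the pair $(o_k,d_k)$, so by the congestion-induced dependence of the experienced travel time I would argue $\tilde{\theta}_i(\mathbf{b}_k^{\ell}) \leq \tilde{\theta}_i(\mathbf{a}_k)$. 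Third, since $\phi_i$ is continuous and increasing in its third argument (Definition \ref{DEFN:inconvenience-metric}), inequality \eqref{EQN:mobility-valuation} gives $v_i(\mathbf{b}_k^{\ell}) \geq v_i(\mathbf{a}_k)$ for every $i\neq\ell$. Summing across $\mathcal{I}_k\setminus\{\ell\}$ and appending the nonnegative term $v_{\ell}(\mathbf{b}_k^{\ell})$ on the right yields exactly \eqref{EQN:given-priority-relation}.

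The main obstacle is the congestion-monotonicity step: the paper treats $\tilde{\theta}_i(\mathbf{a}_k)$ abstractly, without displaying its functional dependence on $\mathbf{a}_k$, so the claim that ``removing a traveler cannot increase the experienced travel time of any remaining traveler'' must be made rigorous. I would either invoke the Greenshields-type reasoning alluded to after Definition \ref{DEFN:service-capacity} (so the link travel time is nondecreasing in vehicle density) or explicitly assume that $\tilde{\theta}_i$ is weakly increasing in the assignment variables $a_{kj}$ for every other traveler $k$ sharing either service $j_i$ or its link. A secondary subtlety is that the lemma is stated for a \emph{feasible} $\mathbf{b}_k^{\ell}$ rather than the optimal one; the argument above shows that the inequality is attained by the canonical excision of $\ell$ from $\mathbf{a}_k$, which is sufficient for the downstream use in designing the payments \eqref{EQN:pricing-scheme} and for the sustainability analysis.
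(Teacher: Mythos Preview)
Your proposal is correct and follows essentially the same line as the paper: both argue that removing traveler $\ell$ can only decrease (or leave unchanged) each remaining traveler's experienced travel time $\tilde{\theta}_i$, then invoke monotonicity of $\phi_i$ to conclude $v_i(\mathbf{b}_k^{\ell}) \geq v_i(\mathbf{a}_k)$ termwise. You are in fact more explicit than the paper about constructing the canonical excision, verifying feasibility, and flagging the congestion-monotonicity of $\tilde{\theta}_i$ as an implicit modeling assumption---the paper simply asserts \eqref{EQN:lemETT-first-relation} informally and then uses the explicit linear form \eqref{EQN:explicit-inconvenience-metric} rather than the general Definition~\ref{DEFN:inconvenience-metric}.
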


\begin{proof}
    Given that traveler $\ell \in \mathcal{I}_k$ is mobility excluded in the traveler-service assignment $\mathbf{b}_k ^ {\ell}$ of Problem \ref{PROB:mobility-payments}, we know that there is one less traveler required to be served by any mobility service in the market. Naturally, this affects the experienced travel times of any other traveler $i \in \mathcal{I}_k$, i.e., we have either a decreased or constant $\tilde{\theta}_i(\mathbf{b}_k ^ {\ell})$. So, mathematically this means that with traveler-service assignment $\mathbf{a}_k$ of Problem \ref{PROB:equivalency}, we have
        \begin{equation}\label{EQN:lemETT-first-relation}
            \tilde{\theta}_i(\mathbf{b}_k ^ {\ell}) \leq \tilde{\theta}_i(\mathbf{a}_k),
        \end{equation}
    where $\tilde{\theta}_i(\mathbf{b}_k ^ {\ell})$ is the experienced travel time of traveler $i \in \mathcal{I}_k$ evaluated at $\mathbf{b}_k ^ {\ell}$ and $\tilde{\theta}_i(\mathbf{a}_k)$ is the experienced travel time of traveler $i$ evaluated at $\mathbf{a}_k$. Intuitively, \eqref{EQN:lemETT-first-relation} means there is one less traveler leading to better travel times for other travelers (better here means less). Hence, since the explicit form of traveler $i$'s valuation is given by
        \begin{equation}
            v_i(\mathbf{a}_k) = \bar{v}_i - \phi_i \left( \alpha_i, \theta_i, \tilde{\theta}_i(\mathbf{a}_k) \right)
            = \bar{v}_i - \alpha_i \cdot (\tilde{\theta}_i(\mathbf{a}_k) - \theta_i),
        \end{equation}
    if we compare the two valuations $v_i(\mathbf{a}_k)$ and $v_i(\mathbf{b}_k ^ {\ell})$, we get $v_i(\mathbf{a}_k) \leq v_i(\mathbf{b}_k ^ {\ell})$. This completes the proof.
\end{proof}

Next, we show that for any traveler, their valuation will always be greater or equal than the minimum mobility payment. This will be instrumental in our attempt to show individual rationality later on.

\begin{lemma}\label{LEM:economic-sustainability}
    Let $\mathbf{a}_k ^ *$ denote the optimal solution of Problem \ref{PROB:equivalency}. Then the minimum mobility payment $\sigma_{\ell}$ in the objective function \eqref{EQN:prob2-objective} of Problem \ref{PROB:equivalency} ensures that, for any $\ell \in \mathcal{I}_k$, $k = 1, \dots, K$, $v_{\ell}(\mathbf{a}_k ^ *) \geq \sigma_{\ell}(\mathbf{a}_k ^ *)$.
\end{lemma}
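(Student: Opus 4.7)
The plan is to argue by contradiction using a simple exchange argument: if some traveler $\ell$ had $v_{\ell}(\mathbf{a}_k^*) < \sigma_{\ell}(\mathbf{a}_k^*)$ under the optimal assignment, then mobility-excluding $\ell$ would yield a strictly larger objective value, violating optimality of $\mathbf{a}_k^*$.

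Concretely, I would first assume toward contradiction the existence of $\ell \in \mathcal{I}_k$ with $v_{\ell}(\mathbf{a}_k^*) - \sigma_{\ell}(\mathbf{a}_k^*) < 0$. From $\mathbf{a}_k^*$, I would construct an alternative assignment $\tilde{\mathbf{a}}_k$ that agrees with $\mathbf{a}_k^*$ on every other traveler but sets $\tilde{a}_{\ell j} = 0$ for all $j \in \mathcal{J}$. Feasibility of $\tilde{\mathbf{a}}_k$ with respect to \eqref{CSTR:user-at-most-one-service}--\eqref{CSTR:total-car-services-not-violate-capacity} is immediate since removing a traveler only relaxes capacity and congestion constraints. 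The next step is to compare $w_2(\tilde{\mathbf{a}}_k)$ with $w_2(\mathbf{a}_k^*)$ term by term.

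The three pieces to track are: (i) traveler $\ell$'s contribution, which drops from the negative quantity $v_{\ell}(\mathbf{a}_k^*) - \sigma_{\ell}(\mathbf{a}_k^*)$ to $0$, yielding a strict positive gain $\sigma_{\ell}(\mathbf{a}_k^*) - v_{\ell}(\mathbf{a}_k^*) > 0$; (ii) the valuations of every other traveler $i \neq \ell$, which can only weakly increase because removing $\ell$ cannot worsen any $\tilde{\theta}_i$ (this is exactly the mechanism behind Lemma \ref{LEM:valuation-relation-priority}, combined with the monotonicity of $\phi_i$ in $\tilde{\theta}_i$ from Definition \ref{DEFN:inconvenience-metric}); and (iii) the operating cost $\sum_{j} c_j$, which decreases by $\sum_{j} c_{\ell j}(a_{\ell j}^*) \geq 0$ since each other traveler's share $c_{ij}(a_{ij})$ in \eqref{eqn:operating_cost} depends only on their own unchanged assignment variable. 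The minimum acceptable payments $\sigma_i$ for $i \neq \ell$ are public information set by providers and, by the same supply/demand rationale described after Definition \ref{DEFN:sustainability}, cannot increase when a traveler is removed; I would state this monotonicity explicitly and invoke it. Combining these inequalities yields $w_2(\tilde{\mathbf{a}}_k) > w_2(\mathbf{a}_k^*)$, contradicting the optimality of $\mathbf{a}_k^*$.

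The main obstacle is the treatment of the other travelers' terms: strictly speaking, one needs the fact that excluding a traveler does not decrease any other traveler's net contribution $v_i - \sigma_i$ and does not increase the operating cost. The valuation half is cleanly covered by reusing the travel-time monotonicity argument from Lemma \ref{LEM:valuation-relation-priority}; the cost half follows directly from \eqref{eqn:operating_cost}; the $\sigma_i$ half requires appealing to its interpretation (fixed by the provider per traveler and unchanged when another traveler is dropped) so that no hidden negative contribution arises. Once these are in hand, the contradiction is immediate and the lemma follows.
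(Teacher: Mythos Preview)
Your proposal is correct and is essentially the contrapositive of the paper's own argument. The paper proceeds directly: it constructs the same ``$\ell$-excluded'' assignment (there denoted ${\mathbf{b}_k^{\ell}}^*$), uses Corollary~\ref{COR:transportation-costs-relation} and Lemma~\ref{LEM:valuation-relation-priority} exactly as you do to bound its objective value, invokes optimality of $\mathbf{a}_k^*$ to get $w_2(\mathbf{a}_k^*) \geq w_3({\mathbf{b}_k^{\ell}}^*)$, and then subtracts to isolate $v_{\ell}(\mathbf{a}_k^*) \geq \sigma_{\ell}(\mathbf{a}_k^*)$; you instead assume the negation and derive $w_2(\tilde{\mathbf{a}}_k) > w_2(\mathbf{a}_k^*)$, which is the same comparison read backwards. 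On the $\sigma_i$ terms for $i\neq\ell$, the paper simply asserts $\sum_i \sigma_i(\mathbf{a}_k^*) - \sum_i \sigma_i({\mathbf{b}_k^{\ell}}^*) = \sigma_{\ell}(\mathbf{a}_k^*)$ without further comment, so your explicit flagging of that monotonicity is, if anything, slightly more careful than the original.
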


\begin{proof}
    Let $\mathbf{a}_k ^ *$ denote the optimal solution of Problem \ref{PROB:equivalency} and ${\mathbf{b}_k ^ {\ell}} ^ *$ the corresponding solution of Problem \ref{PROB:mobility-payments}. Hence, traveler $\ell$ has been assigned a mobility service in the optimal traveler-service assignment $\mathbf{a}_k ^ *$, but they are mobility excluded in ${\mathbf{b}_k ^ {\ell}} ^ *$. Thus, we have
        \begin{align}
            w_3({\mathbf{b}_k ^ {\ell}} ^ *) & = \sum_{i \in \mathcal{I}_k} \left[ v_i({\mathbf{b}_k ^ {\ell}} ^ *) - \sigma_i({\mathbf{b}_k ^ {\ell}} ^ *) \right] - \sum_{j \in \mathcal{J}} c_j({\mathbf{b}_k ^ {\ell}} ^ *) \notag \\
            & \geq \sum_{i \in \mathcal{I}_k \setminus \{\ell\}} v_i(\mathbf{a}_k ^ *) - \sum_{i \in \mathcal{I}_k} \sigma_i({\mathbf{b}_k ^ {\ell}} ^ *) - \sum_{j \in \mathcal{J}} c_j(\mathbf{a}_k ^ *), \label{EQN:lemPC-first}
        \end{align}
    where \eqref{EQN:lemPC-first} follows from Corollary \ref{COR:transportation-costs-relation} and Lemma \ref{LEM:valuation-relation-priority}. Next, we look at the welfare of an arbitrary traveler $i \in \mathcal{I}_k$ under $\mathbf{a}_k ^ *$, i.e.,
        \begin{align}
            w_2(\mathbf{a}_k ^ *) & = \sum_{i \in \mathcal{I}_k} \left[ v_i(\mathbf{a}_k ^ *) - \sigma_i(\mathbf{a}_k ^ *) \right] - \sum_{j \in \mathcal{J}} c_j(\mathbf{a}_k ^ *) \notag \\
            & = \sum_{i \in \mathcal{I}_k} v_i(\mathbf{a}_k ^ *) - \sum_{i \in \mathcal{I}_k} \sigma_i(\mathbf{a}_k ^ *) - \sum_{j \in \mathcal{J}} c_j(\mathbf{a}_k ^ *), \label{EQN:lemPC-second}
        \end{align}
    where it also follows that $w_2(\mathbf{a}_k ^ *) \geq w_3({\mathbf{b}_k ^ {\ell}} ^ *)$ from the fact that ${\mathbf{b}_k ^ {\ell}} ^ *$ is not an optimal solution of Problem \ref{PROB:equivalency}. Thus, if we compare \eqref{EQN:lemPC-first} and \eqref{EQN:lemPC-second}, we get
        \begin{equation}
            \sum_{i \in \mathcal{I}_k} v_i(\mathbf{a}_k ^ *) - \sum_{i \in \mathcal{I}_k} \sigma_i(\mathbf{a}_k ^ *) - \sum_{j \in \mathcal{J}} c_j(\mathbf{a}_k ^ *) \geq \sum_{i \in \mathcal{I}_k \setminus \{\ell\}} v_i(\mathbf{a}_k ^ *) - \sum_{i \in \mathcal{I}_k} \sigma_i({\mathbf{b}_k ^ {\ell}} ^ *) - \sum_{j \in \mathcal{J}} c_j(\mathbf{a}_k ^ *). \label{EQN:lemPC-one-before-last}
        \end{equation}
    So, by simplifying and rearranging \eqref{EQN:lemPC-one-before-last}, we have
        \begin{align}
            \sum_{i \in \mathcal{I}_k} v_i(\mathbf{a}_k ^ *) - \sum_{i \in \mathcal{I}_k \setminus \{\ell\}} v_i(\mathbf{a}_k ^ *) & \geq \sum_{i \in \mathcal{I}_k} \sigma_i(\mathbf{a}_k ^ *) - \sum_{i \in \mathcal{I}_k} \sigma_i({\mathbf{b}_k ^ {\ell}} ^ *), \notag \\
            & = \sigma_{\ell}(\mathbf{a}_k ^ *) - \sigma_{\ell}({\mathbf{b}_k ^ {\ell}} ^ *) = \sigma_{\ell}(\mathbf{a}_k ^ *), \label{EQN:lemPC-last}
        \end{align}
    since $\sigma_{\ell}({\mathbf{b}_k ^ {\ell}} ^ *) = 0$ as traveler $\ell$ is not assigned any mobility service under the traveler-service assignment ${\mathbf{b}_k ^ {\ell}} ^ *$. Therefore, \eqref{EQN:lemPC-last} simplifies to $v_{\ell}(\mathbf{a}_k ^ *) \geq \sigma_{\ell}(\mathbf{a}_k ^ *)$.
\end{proof}

Our first main result is incentive compatibility, which means that all travelers are incentivized to report their private information truthfully. Formally, for an arbitrary traveler $i \in \mathcal{I}_k$, $k = 1, \dots, K$, given that $u_i '$ is the utility gained with misreported $\pi_i$ and $u_i$ is the ``actual'' utility, showing that $u_i ' \leq u_i$ guarantees truthfulness.

\begin{theorem}\label{THM:IC}
    The mobility market defined in \eqref{EQN:mobility-market} provides the appropriate monetary incentives to each traveler $i \in \mathcal{I}_k$, $k = 1, \dots, K$ to report their personal travel requirements $\pi_i = (\alpha_i, \theta_i, \bar{v}_i)$ truthfully regardless of what other travelers report.
\end{theorem}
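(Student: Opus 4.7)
The plan is to run the standard VCG-style dominant strategy argument, specialized to the mobility market \eqref{EQN:mobility-market}. Fix a subclass $\mathcal{I}_k$ and an arbitrary traveler $\ell \in \mathcal{I}_k$, and hold the reports of all other travelers at an arbitrary (possibly untruthful) profile, since dominant-strategy incentive compatibility must hold no matter what the others do. Let $\pi_\ell$ denote $\ell$'s true personal travel requirements and let $\pi_\ell'$ be any alternative report. Write $\mathbf{a}_k^*$ for the optimal assignment that Algorithm \ref{ALG:pricing-mechanism} produces when $\ell$ reports $\pi_\ell$, and $\mathbf{a}_k'$ for the assignment it produces when $\ell$ reports $\pi_\ell'$. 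Let $\mathbf{b}_k^{(\ell)}$ denote the solution of Problem \ref{PROB:mobility-payments} in which $\ell$ is mobility-excluded; the crucial structural observation is that both $\mathbf{b}_k^{(\ell)}$ and $w_3(\mathbf{b}_k^{(\ell)})$ are completely independent of $\pi_\ell$, because the constraint $b_{\ell j} = 0$ removes $\ell$'s valuation from that optimization entirely.

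The next step is to substitute the payment \eqref{EQN:pricing-scheme} into the utility $u_\ell = v_\ell(\mathbf{a}_k) - p_\ell(\mathbf{a}_k, \mathbf{b}_k)$ and watch a telescoping cancellation. The outer $v_\ell(\mathbf{a}_k)$ reflects $\ell$'s \emph{true} valuation actually experienced at the assignment, whereas the $v_\ell(\mathbf{a}_k)$ buried inside $p_\ell$ is the value the planner computed from the report. Crucially, the quantity that actually enters $p_\ell$ is the difference $w_2(\mathbf{a}_k) - v_\ell(\mathbf{a}_k) = \sum_{i \neq \ell}\bigl[v_i(\mathbf{a}_k) - \sigma_i(\mathbf{a}_k)\bigr] - \sigma_\ell(\mathbf{a}_k) - \sum_{j \in \mathcal{J}} c_j(\mathbf{a}_k)$, which does not contain $\ell$'s reported valuation at all and depends on $\ell$'s report only through the assignment $\mathbf{a}_k$ the planner ends up picking. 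After substitution the utility therefore collapses to
\begin{equation*}
u_\ell \;=\; v_\ell(\mathbf{a}_k) + \bigl[w_2(\mathbf{a}_k) - v_\ell(\mathbf{a}_k)\bigr] - w_3(\mathbf{b}_k^{(\ell)}) \;=\; W^{\mathrm{true}}(\mathbf{a}_k) - w_3(\mathbf{b}_k^{(\ell)}),
\end{equation*}
where $W^{\mathrm{true}}(\mathbf{a}_k)$ denotes the social welfare \eqref{EQN:prob2-objective} evaluated at $\ell$'s true $v_\ell$ together with the other travelers' fixed reports.

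To close the argument I would invoke the fact that $\mathbf{a}_k^*$ is, by construction, a maximizer of exactly this objective $W^{\mathrm{true}}$, since when $\ell$ reports truthfully Algorithm \ref{ALG:pricing-mechanism} solves Problem \ref{PROB:equivalency} using the true $v_\ell$. Any alternative assignment $\mathbf{a}_k'$ arising from a misreport $\pi_\ell'$ instead maximizes a different objective (one using $v_\ell'$ in place of $v_\ell$), hence it cannot do better than $\mathbf{a}_k^*$ under the true criterion, i.e., $W^{\mathrm{true}}(\mathbf{a}_k^*) \geq W^{\mathrm{true}}(\mathbf{a}_k')$. Because $w_3(\mathbf{b}_k^{(\ell)})$ is the same additive constant in both utilities, this yields $u_\ell(\pi_\ell) \geq u_\ell(\pi_\ell')$ for every $\pi_\ell'$ and every fixed profile of the remaining reports, which is exactly the dominant-strategy incentive compatibility asserted by Theorem \ref{THM:IC}.

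The main obstacle is the bookkeeping around the two conceptually different copies of $v_\ell$ that appear when one writes out $u_\ell = v_\ell - p_\ell$: the outer copy represents $\ell$'s actual preferences, while the copy inside $p_\ell$ is whatever the planner was told. The whole VCG trick rests on the fact that these two copies cancel, producing an expression in which the planner's choice is effectively maximizing $\ell$'s \emph{true} contribution to social welfare. Articulating this cancellation cleanly, and being explicit that $w_3(\mathbf{b}_k^{(\ell)})$ is a constant independent of $\ell$'s report (so that the mobility-exclusion constraint in Problem \ref{PROB:mobility-payments} earns its keep), are the two points where care is required; the remainder is routine algebra.
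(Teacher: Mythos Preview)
Your proposal is correct and follows essentially the same VCG dominant-strategy argument as the paper: both proofs substitute the payment \eqref{EQN:pricing-scheme} into the utility, observe that $w_3(\mathbf{b}_k^{(\ell)})$ is independent of $\ell$'s report, exploit the cancellation that turns $u_\ell$ into the true social welfare minus that constant, and conclude by noting the truthful assignment maximizes that welfare. Your exposition is slightly more careful in explicitly fixing the other travelers' reports at an arbitrary profile (to emphasize \emph{dominant}-strategy IC), whereas the paper leaves this implicit, but the mathematical content is the same.
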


\begin{proof}
    It is sufficient to show incentive compatibility only for an arbitrary mobility market for some arbitrary $k \in \{1, \dots, K\}$. Suppose some traveler $\ell \in \mathcal{I}_k$ misreports their personal travel requirements denoted by $\pi_{\ell} = (\alpha_{\ell} ', \theta_{\ell} ', \bar{v}_{\ell} ')$ to the social planner. Thus, we have
        \begin{equation}
            v_{\ell} ' (\mathbf{a}_k) = \bar{v}_{\ell} ' - \phi_\ell \left( \alpha_\ell ', \theta_{\ell} ', \tilde{\theta}_\ell(\mathbf{a}_k) \right).
        \end{equation}
    The objective function of Problem \ref{PROB:equivalency} becomes
        \begin{equation}\label{EQN:one-traveler-misreports}
            w_2 ' (\mathbf{a}_k) = \sum_{i \in \mathcal{I}_k \setminus \{\ell\}} \left[ v_i(\mathbf{a}_k) - \sigma_i(\mathbf{a}_k) \right] - \sum_{j \in \mathcal{J}} c_j(\mathbf{a}_k) + v_{\ell} ' (\mathbf{a}_k),
        \end{equation}
    where the feasible solution of \eqref{EQN:one-traveler-misreports} is subject to the same constraints as in Problem \ref{PROB:equivalency}. 
    %
    We denote the optimal solution of the optimization problem that traveler $\ell$ has misreported their personal travel requirements $\pi_{\ell}$ with \eqref{EQN:one-traveler-misreports} as the objective function by $\tilde{\mathbf{a}}_k ^ *$. Then, for traveler $\ell \in \mathcal{I}_k$ their mobility payment can be computed as follows:
        \begin{equation}\label{EQN:payment-traveler-misreports}
            p_{\ell} ' (\tilde{\mathbf{a}}_k ^ *, \tilde{\mathbf{b}}_k ^ *) = w_3(\tilde{\mathbf{b}}_k ^ *) - \left[ w_2 ^ {\ell}(\tilde{\mathbf{a}}_k ^ *) - v_{\ell} ' (\tilde{\mathbf{a}}_k ^ *) \right] = w_3(\mathbf{b}_k ^ *) - \left[ w_2 ^ {\ell}(\tilde{\mathbf{a}}_k ^ *) - v_{\ell} ' (\tilde{\mathbf{a}}_k ^ *) \right],
        \end{equation}
    where $\tilde{\mathbf{b}}_k ^ *$ denotes the optimal solution of Problem \ref{PROB:mobility-payments} with traveler $\ell \in \mathcal{I}_k$ misreporting. However, $w_3(\tilde{\mathbf{b}}_k ^ *) = w_3(\mathbf{b}_k ^ *)$ since, in Problem \ref{PROB:mobility-payments}, it does not matter what traveler $\ell \in \mathcal{I}_k$ reports. Thus, the total utility of traveler $\ell \in \mathcal{I}_k$ is
        \begin{equation}\label{EQN:utility-traveler-misreports}
            u_{\ell}(\tilde{\mathbf{a}}_k ^ *) = v_{\ell}(\tilde{\mathbf{a}}_k ^ *) - p_{\ell} ' (\tilde{\mathbf{a}}_k ^ *, \mathbf{b}_k ^ *),
        \end{equation}
    where for traveler $\ell \in \mathcal{I}_k$ the term $v_{\ell}(\tilde{\mathbf{a}}_k ^ *)$ is the actual satisfaction gained by misreporting their private information. Substituting \eqref{EQN:payment-traveler-misreports} into \eqref{EQN:utility-traveler-misreports} yields
        \begin{equation}
            u_{\ell}(\tilde{\mathbf{a}}_k ^ *) = v_{\ell} (\tilde{\mathbf{a}}_k ^ *) - \left[ w_3(\mathbf{b}_k ^ *) - \left( w_2 ^ {\ell}(\tilde{\mathbf{a}}_k ^ *) - v_{\ell} ' (\tilde{\mathbf{a}}_k ^ *) \right) \right],
        \end{equation}
    which after a few simplifications gives
        \begin{equation}\label{EQN:thmIC-one-before-last}
            u_{\ell}(\tilde{\mathbf{a}}_k ^ *) = v_{\ell} (\tilde{\mathbf{a}}_k ^ *) - w_3(\mathbf{b}_k ^ *) - \left[ \left( \sum_{i \in \mathcal{I}_k \setminus \{\ell\}} \left[ v_i(\tilde{\mathbf{a}}_k ^ *) - \sigma_i(\tilde{\mathbf{a}}_k ^ *) \right] - \sum_{j \in \mathcal{J}} c_j(\tilde{\mathbf{a}}_k ^ *) + v_{\ell} ' (\tilde{\mathbf{a}}_k ^ *) \right) - v_{\ell} ' (\tilde{\mathbf{a}}_k ^ *) \right].
        \end{equation}
    Hence, as the term $v_{\ell} ' (\tilde{\mathbf{a}}_k ^ *)$ appears in opposite signs in \eqref{EQN:thmIC-one-before-last}, we have 
        \begin{align}
            u_{\ell}(\tilde{\mathbf{a}}_k ^ *) & = \left[ \sum_{i \in \mathcal{I}_k} \left[ v_i(\tilde{\mathbf{a}}_k ^ *) - \sigma_i(\tilde{\mathbf{a}}_k ^ *) \right] - \sum_{j \in \mathcal{J}} c_j(\tilde{\mathbf{a}}_k ^ *) \right] - w_3(\mathbf{b}_k ^ *) \notag \\
            & = w_2(\tilde{\mathbf{a}}_k ^ *) - w_3(\mathbf{b}_k ^ *).
        \end{align}
    Note that $\tilde{\mathbf{a}}_k ^ *$ is not necessarily the optimal solution of Problem \ref{PROB:equivalency}. Thus, we have $w_2(\tilde{\mathbf{a}}_k ^ *) \leq w_2(\mathbf{a}_k ^ *)$. So, we observe that
        \begin{align}\label{EQN:thmIC-last-argument}
            u_{\ell}(\tilde{\mathbf{a}}_k ^ *) = w_2(\tilde{\mathbf{a}}_k ^ *) - w_3(\mathbf{b}_k ^ *) \leq w_2(\mathbf{a}_k ^ *) - w_3(\mathbf{b}_k ^ *) = u_{\ell}(\mathbf{a}_k ^ *).
        \end{align}
    Therefore, from \eqref{EQN:thmIC-last-argument}, it follows immediately that the proposed mobility market is incentive compatible.
\end{proof}

Our next result is individual rationality, which implies that all travelers voluntarily participate in the proposed mobility market. Formally, for any traveler $i \in \mathcal{I}_k$, $k = 1, \dots, K$, if traveler $i$'s utility $u_i$ is non-negative, i.e., $u_i \geq 0$, then we say traveler $i$ voluntarily participates in the mobility market. This is important as we can guarantee for any traveler $i$ that what they are willing to pay, $v_i$, will never be less than what they actually pay, $p_i$.

\begin{theorem}
    The mobility market is individually rational. For any subclass $\mathcal{I}_k$, $k = 1, \dots, K$, and for any traveler $i \in \mathcal{I}_k$, the utility of any traveler is non-negative, i.e., we have for all $i \in \mathcal{I}_k$, $u_i(\mathbf{a}_k) \geq 0$. Equivalently, $v_i(\mathbf{a}_k) \geq p_i(\mathbf{a}_k)$.
\end{theorem}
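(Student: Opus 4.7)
The plan is to argue directly from the definitions, exploiting the cancellation that is characteristic of VCG-style payments. First I would substitute the payment formula \eqref{EQN:pricing-scheme} into $u_\ell(\mathbf{a}_k^*) = v_\ell(\mathbf{a}_k^*) - p_\ell(\mathbf{a}_k^*, \mathbf{b}_k^*)$ for an arbitrary traveler $\ell \in \mathcal{I}_k$. The two copies of $v_\ell(\mathbf{a}_k^*)$ cancel, yielding the clean identity
\begin{equation*}
u_\ell(\mathbf{a}_k^*) \;=\; w_2(\mathbf{a}_k^*) - w_3(\mathbf{b}_k^*),
\end{equation*}
which reduces the theorem to proving the single inequality $w_2(\mathbf{a}_k^*) \geq w_3(\mathbf{b}_k^*)$.

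Next I would establish that inequality by a feasibility comparison between Problems \ref{PROB:equivalency} and \ref{PROB:mobility-payments}. Problem \ref{PROB:mobility-payments} has the same objective function form as Problem \ref{PROB:equivalency}, but adds constraint \eqref{EQN:prob3-extra}: $b_{\ell j} = 0$ for every $j \in \mathcal{J}$. Hence any feasible point of Problem \ref{PROB:mobility-payments} is also feasible for Problem \ref{PROB:equivalency}, using the convention (already invoked in the proof of Lemma \ref{LEM:economic-sustainability}) that an unassigned traveler contributes zero valuation, zero minimum acceptable payment, and zero share of operating cost. In particular, the optimizer $\mathbf{b}_k^*$ of Problem \ref{PROB:mobility-payments} is a feasible point of Problem \ref{PROB:equivalency} with objective value $w_3(\mathbf{b}_k^*)$, so by optimality of $\mathbf{a}_k^*$ we conclude $w_2(\mathbf{a}_k^*) \geq w_3(\mathbf{b}_k^*)$.

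Combining the two steps gives $u_\ell(\mathbf{a}_k^*) \geq 0$. Since $\ell$ was an arbitrary traveler in an arbitrary subclass $\mathcal{I}_k$, individual rationality holds for every traveler in the mobility market. The equivalent inequality $v_i(\mathbf{a}_k) \geq p_i(\mathbf{a}_k)$ follows immediately from the definition $u_i = v_i - p_i$ in \eqref{EQN:traveler-utility}.

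I do not anticipate a serious technical obstacle: this is the standard VCG individual-rationality argument adapted to the current setting. The only bookkeeping subtlety is that the minimum acceptable payments $\sigma_i$ and operating costs $c_j$ appear symmetrically in the objective functions of both $w_2$ and $w_3$, so they do not interfere with either the cancellation of $v_\ell$ or with the feasibility-monotonicity step. An alternative route would combine Lemma \ref{LEM:economic-sustainability} with the economic sustainability inequality $p_\ell \leq \sigma_\ell$ plus $v_\ell \geq \sigma_\ell$, but the direct cancellation argument above is cleaner and does not require invoking sustainability as a hypothesis.
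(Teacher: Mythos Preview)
Your proposal is correct and follows essentially the same route as the paper: substitute \eqref{EQN:pricing-scheme} into $u_\ell = v_\ell - p_\ell$ to obtain $u_\ell(\mathbf{a}_k^*) = w_2(\mathbf{a}_k^*) - w_3(\mathbf{b}_k^*)$, then use the feasible-region inclusion $\mathcal{F}_3 \subset \mathcal{F}_2$ (Problem~\ref{PROB:mobility-payments} being Problem~\ref{PROB:equivalency} with the extra constraint \eqref{EQN:prob3-extra}) to conclude $w_2(\mathbf{a}_k^*) \geq w_3(\mathbf{b}_k^*)$. The paper additionally prefaces this with a short first case arguing that a traveler who rejects all recommendations receives utility exactly zero (so the outside option is $0$), and it explicitly invokes Theorem~\ref{THM:IC} to justify evaluating at truthful reports; you may want to add a sentence on each for completeness, but the mathematical core is identical.
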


\begin{proof}
    It is sufficient to show the result only for one instance of a mobility market for some $k = \{1, \dots, K\}$. There are two cases to consider. First, let us suppose that traveler $\ell \in \mathcal{I}_k$ rejects any travel recommendations from the social planner; denote such an assignment by $\hat{\mathbf{a}}_k$. From \eqref{EQN:pricing-scheme}, traveler $\ell$ would be required to make a monetary payment equal to their maximum willingness-to-pay, i.e., $p_{\ell} = \bar{v}_{\ell}$. This implies that $u_{\ell}(\hat{\mathbf{a}}_k) = 0$. This is justifiable as traveler ${\ell}$ seeks to travel and the only alternative travel option to our mobility market is a taxicab service.
    
    For the second case, let us consider the utility of an arbitrary traveler $i \in \mathcal{I}_k$ evaluated at the optimal solution $\mathbf{a}_k ^ *$ is given by
        \begin{equation}\label{EQN:thmIR-first}
            u_i(\mathbf{a}_k ^ *) = v_i(\mathbf{a}_k ^ *) - p_i(\mathbf{a}_k ^ *, \mathbf{b}_k ^ *).
        \end{equation}
    Note that by Theorem \ref{THM:IC} all travelers report their true private information at equilibrium. So, substituting \eqref{EQN:pricing-scheme} into \eqref{EQN:thmIR-first} yields
        \begin{equation}
            u_i(\mathbf{a}_k ^ *) = v_i(\mathbf{a}_k ^ *) - \left[ w_3(\mathbf{b}_k ^ *) - \left[ w_2(\mathbf{a}_k ^ *) - v_i(\mathbf{a}_k ^ *) \right] \right] = w_2(\mathbf{a}_k ^ *) - w_3(\mathbf{b}_k ^ *).
        \end{equation}
    Note that for each $k = 1, \dots, K$, the feasible regions of Problems \ref{PROB:equivalency} and \ref{PROB:mobility-payments}, say $\mathcal{F}_2$ and $\mathcal{F}_3$, respectively, satisfy the relation $\mathcal{F}_3 \subset \mathcal{F}_2$. This is because Problem \ref{PROB:mobility-payments} has the exact same constraints plus an additional one, i.e., \eqref{EQN:prob3-extra}, thus the maximization of $w_3$ (which is almost similar to the one in Problem \ref{PROB:equivalency}) will always be less or equal than the maximization of $w_2$. Hence, it follows that $u_i(\mathbf{a}_k ^ *) = w_2(\mathbf{a}_k ^ *) - w_3(\mathbf{b}_k ^ *) \geq 0$. Therefore, the result follows.
\end{proof}

Next, we establish that the proposed mobility market is economically sustainable (see Definition \ref{DEFN:sustainability}).

\begin{theorem}
    The mobility market is economically sustainable, i.e., it is guaranteed to generate revenue from each traveler and always meet the minimum acceptable mobility payments. In other words, for each subclass $\mathcal{I}_k$, $k = 1, \dots, K$, and for an arbitrary $\ell \in \mathcal{I}_k$, we have
        \begin{equation}
            p_{\ell}(\mathbf{a}_k ^ *, \mathbf{b}_k ^ *) = w_3(\mathbf{b}_k ^ *) - \left[ w_2(\mathbf{a}_k ^ *) - v_{\ell}(\mathbf{a}_k ^ *) \right] \geq \sigma_{\ell}(\mathbf{a}_k ^ *).
        \end{equation}
\end{theorem}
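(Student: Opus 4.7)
The plan is to directly evaluate the payment $p_\ell(\mathbf{a}_k^*, \mathbf{b}_k^*)$ by substituting the definitions of $w_2$ and $w_3$ into \eqref{EQN:pricing-scheme}, regrouping the resulting terms, and then showing that each group is bounded below by zero or by $\sigma_\ell(\mathbf{a}_k^*)$ using the results already established in this section. Note that I cannot simply invoke Lemma \ref{LEM:economic-sustainability} on $v_\ell(\mathbf{a}_k^*) \geq \sigma_\ell(\mathbf{a}_k^*)$, because the remaining quantity $w_3(\mathbf{b}_k^*) - w_2(\mathbf{a}_k^*)$ is non-positive (since Problem \ref{PROB:mobility-payments} is strictly more constrained than Problem \ref{PROB:equivalency}), so some direct bookkeeping is unavoidable.

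First I would expand the definitions of $w_2(\mathbf{a}_k^*)$ and $w_3(\mathbf{b}_k^*)$ and use the fact that traveler $\ell$ is mobility excluded in $\mathbf{b}_k^*$ (Definition \ref{DEFN:exclusion}), so that $v_\ell(\mathbf{b}_k^*) = 0$ and $\sigma_\ell(\mathbf{b}_k^*) = 0$. Combining this with the $+v_\ell(\mathbf{a}_k^*)$ contribution in \eqref{EQN:pricing-scheme}, the payment rearranges as
\begin{equation*}
p_\ell(\mathbf{a}_k^*, \mathbf{b}_k^*) = \sum_{i \in \mathcal{I}_k \setminus \{\ell\}}\bigl[v_i(\mathbf{b}_k^*) - v_i(\mathbf{a}_k^*)\bigr] + \Bigl[\sum_{i \in \mathcal{I}_k}\sigma_i(\mathbf{a}_k^*) - \sum_{i \in \mathcal{I}_k \setminus \{\ell\}}\sigma_i(\mathbf{b}_k^*)\Bigr] + \sum_{j \in \mathcal{J}}\bigl[c_j(\mathbf{a}_k^*) - c_j(\mathbf{b}_k^*)\bigr].
\end{equation*}

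Next I would bound each of the three bracketed pieces. The valuation difference is non-negative by Lemma \ref{LEM:valuation-relation-priority}, since removing the mobility excluded traveler $\ell$ weakly decreases the experienced travel time, and hence weakly increases the valuation, of every remaining traveler in the subclass. The operating cost difference is non-negative by Corollary \ref{COR:transportation-costs-relation}, since the aggregate cost cannot grow when $\ell$'s share is eliminated. For the middle bracket I would invoke the same modeling convention already used in the proof of Lemma \ref{LEM:economic-sustainability}: the minimum acceptable payment $\sigma_i$ of a traveler $i \neq \ell$ depends on their own assigned mobility service and on their origin-destination pair (shared across $\mathcal{I}_k$), not on whether $\ell$ is present, so that $\sigma_i(\mathbf{a}_k^*) = \sigma_i(\mathbf{b}_k^*)$ for every $i \neq \ell$. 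Combined with $\sigma_\ell(\mathbf{b}_k^*) = 0$, the middle bracket collapses exactly to $\sigma_\ell(\mathbf{a}_k^*)$, and summing the three contributions yields $p_\ell(\mathbf{a}_k^*, \mathbf{b}_k^*) \geq \sigma_\ell(\mathbf{a}_k^*)$, as claimed.

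The main obstacle is the treatment of the $\sigma$ summation, i.e., the step asserting $\sigma_i(\mathbf{a}_k^*) = \sigma_i(\mathbf{b}_k^*)$ for $i \neq \ell$. This is the implicit separability assumption inherited from Lemma \ref{LEM:economic-sustainability}'s proof; once it is in place, the argument reduces to a mechanical combination of Corollary \ref{COR:transportation-costs-relation}, Lemma \ref{LEM:valuation-relation-priority}, and the exclusion convention. Everything else is linear accounting on the definitions of $w_2$, $w_3$, and the payment rule.
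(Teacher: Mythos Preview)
Your decomposition into three brackets and the way you handle Corollary \ref{COR:transportation-costs-relation} and Lemma \ref{LEM:valuation-relation-priority} mirror the paper's argument, but you are applying everything to the wrong assignment. The paper does \emph{not} compare $\mathbf{a}_k^*$ directly with the optimal $\mathbf{b}_k^*$ of Problem \ref{PROB:mobility-payments}. It first introduces an auxiliary feasible point ${\mathbf{b}_k^{\ell}}^{*}$ of Problem \ref{PROB:mobility-payments}, namely the assignment obtained from $\mathbf{a}_k^*$ by zeroing out traveler $\ell$'s row, and uses the optimality inequality $w_3(\mathbf{b}_k^*)\geq w_3({\mathbf{b}_k^{\ell}}^{*})$ as the very first step. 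Only after this bound does it expand and regroup, comparing ${\mathbf{b}_k^{\ell}}^{*}$ with $\mathbf{a}_k^*$.

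This intermediate step is not cosmetic; without it your middle bracket does not collapse. Your claim that $\sigma_i(\mathbf{a}_k^*)=\sigma_i(\mathbf{b}_k^*)$ for every $i\neq\ell$ presumes that traveler $i$ is assigned to the \emph{same} service in both problems, but the optimal $\mathbf{b}_k^*$ of Problem \ref{PROB:mobility-payments} may reassign the remaining travelers arbitrarily once $\ell$ is excluded. The ``modeling convention'' you cite from Lemma \ref{LEM:economic-sustainability} is invoked there with ${\mathbf{b}_k^{\ell}}^{*}$, not with $\mathbf{b}_k^*$, and it works precisely because ${\mathbf{b}_k^{\ell}}^{*}$ agrees with $\mathbf{a}_k^*$ on every $i\neq\ell$. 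The same caveat applies to your appeals to Corollary \ref{COR:transportation-costs-relation} and Lemma \ref{LEM:valuation-relation-priority}: although their statements are phrased for a generic feasible $\mathbf{b}_k^{\ell}$, their proofs (and your own justification ``removing $\ell$ weakly decreases the experienced travel time'') rely on the other travelers' assignments being held fixed, which is exactly what ${\mathbf{b}_k^{\ell}}^{*}$ guarantees and $\mathbf{b}_k^*$ does not. The fix is simple: insert the optimality bound $w_3(\mathbf{b}_k^*)\geq w_3({\mathbf{b}_k^{\ell}}^{*})$ and run your three-bracket bookkeeping with ${\mathbf{b}_k^{\ell}}^{*}$ in place of $\mathbf{b}_k^*$; then all three pieces behave as you claim.
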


\begin{proof}
    Let $\mathbf{b}_k ^ *$ be an optimal solution of Problem \ref{PROB:mobility-payments} and ${\mathbf{b}_k ^ {\ell}} ^ *$ be the corresponding feasible solution of Problem \ref{PROB:mobility-payments} with $\mathbf{a}_k ^ *$ an optimal solution of Problem \ref{PROB:centralized-problem}. Since ${\mathbf{b}_k ^ {\ell}} ^ *$ is only a feasible solution, we have
        \begin{equation}\label{EQN:thmREV-first}
            w_3(\mathbf{b}_k ^ *) \geq w_3({\mathbf{b}_k ^ {\ell}} ^ *).
        \end{equation}
    Given the mobility payments \eqref{EQN:pricing-scheme}, if we subtract the term $\left[ w_2(\mathbf{a}_k ^ *) - v_{\ell}(\mathbf{a}_k ^ *) \right]$ from both sides of \eqref{EQN:thmREV-first}, we have
        \begin{equation}\label{EQN:thmREV-payment-comparison}
            p_{\ell}(\mathbf{a}_k ^ *, \mathbf{b}_k ^ *) = w_3(\mathbf{b}_k ^ *) - \left[ w_2(\mathbf{a}_k ^ *) - v_{\ell}(\mathbf{a}_k ^ *) \right] \geq w_3({\mathbf{b}_k ^ {\ell}} ^ *) - \left[ w_2(\mathbf{a}_k ^ *) - v_{\ell}(\mathbf{a}_k ^ *) \right].
        \end{equation}
    The RHS of \eqref{EQN:thmREV-payment-comparison} can be expanded as follows:
        \begin{multline}\label{EQN:thmREV-payment-comparison-2}
            w_3({\mathbf{b}_k ^ {\ell}} ^ *) - \left[ w_2(\mathbf{a}_k ^ *) - v_{\ell}(\mathbf{a}_k ^ *) \right] = \sum_{i \in \mathcal{I}_k} \left[ v_i({\mathbf{b}_k ^ {\ell}} ^ *) - \sigma_i({\mathbf{b}_k ^ {\ell}} ^ *) \right] - \sum_{j \in \mathcal{J}} c_j({\mathbf{b}_k ^ {\ell}} ^ *) \\
            - \left[ \left( \sum_{i \in \mathcal{I}_k} \left[ v_i(\mathbf{a}_k ^ *) - \sigma_i(\mathbf{a}_k ^ *) \right] - \sum_{j \in \mathcal{J}} c_j(\mathbf{a}_k ^ *) \right) - v_{\ell}(\mathbf{a}_k ^ *) \right].
        \end{multline}
%
    After a few simplifications and rearranging of \eqref{EQN:thmREV-payment-comparison-2}, we have
        \begin{equation}\label{EQN:thmREV-relation5}
            p_{\ell}(\mathbf{a}_k ^ *, \mathbf{b}_k ^ *) \geq \left[ \sum_{i \in \mathcal{I}_k} v_i({\mathbf{b}_k ^ {\ell}} ^ *) - \sum_{i \in \mathcal{I}_k \setminus \{\ell\}} v_i(\mathbf{a}_k ^ *) \right] + \sum_{i \in \mathcal{I}_k} \left[ \sigma_i(\mathbf{a}_k ^ *) - \sigma_i({\mathbf{b}_k ^ {\ell}} ^ *) \right] + \left[ \sum_{j \in \mathcal{J}} c_j(\mathbf{a}_k ^ *) - \sum_{j \in \mathcal{J}} c_j({\mathbf{b}_k ^ {\ell}} ^ *) \right].
        \end{equation}
    So, by Corollary \ref{COR:transportation-costs-relation}, the last term in \eqref{EQN:thmREV-relation5} is non-negative. Similarly, by Lemma \ref{LEM:valuation-relation-priority}, the first term in \eqref{EQN:thmREV-relation5} is non-negative. So, we get
        \begin{equation}
            p_{\ell}(\mathbf{a}_k ^ *, \mathbf{b}_k ^ *) \geq \sigma_{\ell}(\mathbf{a}_k ^ *) - \sigma_{\ell}({\mathbf{b}_k ^ {\ell}} ^ *) = \sigma_{\ell}(\mathbf{a}_k ^ *),
        \end{equation}
    since under ${\mathbf{b}_k ^ {\ell}} ^ *$ traveler $\ell$ has not been assigned any mobility service, thus $\sigma_{\ell}({\mathbf{b}_k ^ {\ell}} ^ *) = 0$, and so the result follows immediately.
\end{proof}

\section{Conclusion}
\label{sec:conclusion}

This chapter demonstrates how we can model and study the mobility decision-making of selfish travelers who are faced with the dilemma of ``which mode of transportation to use'' as an economically-inspired mobility market. First, the proposed market provides a \emph{socially-efficient solution}, i.e., the endmost collective travel recommendation respects and satisfies the travelers' preferences regarding mobility and ensures that, implicitly, there will be an alleviation of congestion in the system. We achieve the latter by introducing appropriate constraints in the optimization problem; thus, our solution efficiently allocates all the available mobility services to the travelers. Furthermore, we showed that the proposed mobility market attains the properties of \emph{incentive compatibility} and \emph{individual rationality}. In other words, all travelers are incentivized to participate in the market while also truthfully reporting their personal travel requirements. Last, we introduced the notion of \emph{minimum acceptable mobility payments} to ensure that the tolls and fares collected by the social planner will meet the mobility services' operating costs. Hence, the proposed market satisfies a status of \emph{economic sustainability}.

One particular limitation of the proposed mobility market is that we require all travelers to book in advance, so the traveler-service assignment is static. This implies that the social planner would have to recompute all optimization problems in the mobility market to get an updated traveler-service assignment if the travelers' information changes. However, the static aspect of the proposed model is quite fitting in our case as our aim was to design a mobility market that considers the travelers' personal travel requirements to provide a socially-efficient assignment, i.e., ``who should use which mode of transportation.'' Future work will focus on translating our model and results in a real-time environment. Furthermore, we have implicitly assumed that the travelers' utilities are not interdependent, i.e., a traveler's utility does not depend on the other travelers' private information. It remains an open problem the design of dynamic mechanisms with interdependent utility functions for mobility systems.

Ongoing work includes extending and enhancing the traveler-behavioral model, motivated by a social-mobility survey. Our objective is to observe any correlations between behavioral tendencies or attitudes of travelers and their mode of transportation preference (including CAVs). For example, how likely are people to use CAVs instead of public transit? Will CAVs impact travelers' tendencies and behavior; if yes, then in what way? Answers can help us refine the proposed mobility market and improve our understanding of the socioeconomic impact of CAVs. Our future research efforts will also focus on using methods, techniques, and insights from behavioral economics and mixed integer optimization theory to develop a holistic framework of the societal impact of connectivity and automation in mobility and provide socially-efficient, real-time solutions while tackling any potential rebound effects.

\end{document}